\newtheorem{theorem}{Theorem}
\newtheorem{prop}[theorem]{Proposition}
\newtheorem{corollary}[theorem]{Corollary}
\newtheorem{proposition}[theorem]{Proposition}
\newcommand{\N}{\mathbb N}
\newcommand{\Zp}{\mathbb{F}_p}
\newcommand{\Zn}{\mathbb{Z}_n}
\newcommand{\Q}{\mathbb Q}
\newcommand{\Zz}{{\mathbb Z}(\zeta_n)}
\newcommand{\pid}{{\mathfrak p}}
\newcommand{\poly}{\mbox{{\sf poly}}}
\newcommand{\Pt}{\mbox{{\sf P}}}
\newcommand{\BPP}{\mbox{\sf{BPP}}}
\newcommand{\NC}{\mbox{{\sf NC}}}
\newcommand{\coNP}{\mbox{{\sf coNP}}}
\newcommand{\NP}{\mbox{{\sf NP}}}
\newcommand{\PSPACE}{\mbox{{\sf PSPACE}}}
\newcommand{\sct}{\mbox{{Sparse-CIT}}}
\newcommand{\skct}{\mbox{{Bounded-CIT}}}
\newcommand{\cct}{\mbox{{CIT}}}
\title{Cyclotomic Identity Testing and Applications}
\author[1]{Nikhil Balaji}
\author[2]{Sylvain Perifel}
\author[2]{Mahsa Shirmohammadi}
\author[3]{James Worrell}
\affil[1]{Department of Computer Science and Engineering, IIT Delhi, India}
\affil[2]{Universit\'e de Paris, CNRS, IRIF, F-75013 Paris, France}
\affil[3]{Department of Computer Science, University of Oxford, UK}
\begin{document}
\clearpage\thispagestyle{empty}
\maketitle

\begin{abstract}
We consider the cyclotomic identity testing (CIT) problem: given a 
polynomial~$f(x_1,\ldots,x_k)$, 
decide whether $f(\zeta_n^{e_1},\ldots,\zeta_n^{e_k})$ is zero, 
where $\zeta_n = e^{2\pi i/n}$ is a primitive complex $n$-th root of unity 
and $e_1,\ldots,e_k$ are integers, represented in binary.
When~$f$ is given by an algebraic circuit,  
we give a randomized 
polynomial-time algorithm for CIT
assuming  the generalised Riemann hypothesis (GRH), and  show that the  problem is  in {\coNP} unconditionally.
When $f$ is given by a circuit of polynomially bounded
degree, we give a randomized {\NC} algorithm. 
In case $f$ is a linear form
we show that the problem lies in~{\NC}.
Towards understanding when CIT can be solved in 
deterministic polynomial-time, we 
consider so-called diagonal depth-3 circuits, i.e., 
polynomials 
$f=\sum_{i=1}^m g_i^{d_i}$, where $g_i$ is a linear form
and $d_i$ a positive integer given in unary.  
We observe that a polynomial-time algorithm for CIT on this class would yield
a sub-exponential-time algorithm for polynomial identity testing.  However, 
assuming GRH, we show that 
if the linear forms~$g_i$ are all identical then CIT can be 
solved in polynomial time.
Finally, we use our results to give a new proof that equality of compressed 
strings, i.e., strings presented using context-free grammars, can be decided in 
randomized~{\NC}.
\end{abstract}


\section{Introduction}\label{sec:intro}
Identity testing in number fields is a fundamental problem in algorithmic algebra
that has been
studied in relation to solving systems of
polynomial equations~\cite{ge, koiran-hilbert} and  polynomial identity
testing~\cite{chen-kao}.
Among number
fields, cyclotomic fields, i.e., those generated by roots of unity,
play a central role.  The aim of this paper is
a comprehensive study of the computational complexity of identity testing in cyclotomic fields.

We consider \emph{cyclotomic identity testing problems}, where the input consists of 
a polynomial $f(x_1,\ldots,x_k)$ with integer coefficients, together with
integers $n,e_1,\ldots,e_k$, and the task
is to decide whether $f(\zeta_n^{e_1},\ldots,\zeta_n^{e_k})$ is zero for $\zeta_n=e^{2\pi i/n}$.
We consider four variants of this problem according to the representation of $f$: 
(i)~$f$ is given as an algebraic circuit;
(ii)~$f$ is given by a circuit of polynomially bounded syntactic degree;
(iii)~$f$ is a linear form; 
(iv)~$f=\sum_{i=1}^m g_i^{d_i}$ where each $g_i$ is a 
linear form and $d_i$ is an integer in unary.
Although $f$ is a multivariate polynomial,
since it is evaluated on powers of a common primitive $n$-th root of unity,
we formalise the above problems in terms of circuits whose input 
gates are labelled 
by powers of a single variable~$x$.  

Formally, for our purposes an algebraic circuit $C$ is a directed, acyclic graph with labelled vertices and edges.  
Vertices of in-degree zero are labelled in the set of monomials
    $\{x^e:e\in\mathbb{N}\}$ and the remaining vertices have labels in $\{+,\times\}$.
    Moreover the incoming edges to $+$-vertices have labels in $\mathbb{Z}$, that is, the $+$-gates compute integer-weighted sums.
    There is a unique vertex of out-degree zero which determines the output of the circuit, a univariate polynomial, in an obvious manner.
    The size of $C$ is the sum of the number of edges in the underlying graph and the bit-length of  all integer constants appearing in $C$.  The syntactic degree 
    of $C$ is defined inductively as follows: input gates have degree $1$, the degree of an addition gate is the maximum of the degrees of 
    its inputs, the degree of a multiplication gate is the sum of the degrees of its inputs, and the degree of $C$ is the degree 
    of the output gate.  Note that
    the syntactic degree of $C$ is \emph{not} an upper bound on the degree
    of the computed polynomial since we allow monomials as inputs.   Unless 
    otherwise stated, we assume that all integers are represented in binary.
       
The four main variants of the cyclotomic identity testing problem are as follows: 
 
 In the \emph{Cyclotomic Identity Testing (\cct)} problem the input is an algebraic circuit $C$ representing 
    a polynomial $f(x)$,
    together with an integer $n$, and the task is to determine whether $f(\zeta_n)=0$, where $\zeta_n = e^{2\pi i/n}$ is a 
    primitive complex $n$-th root of unity.
  
   The \emph{{\skct}} problem is defined exactly as the {\cct} problem, except that the input also includes
    an upper bound on 
 the syntactic degree of the circuit $C$ that is given in unary.  
 Thus in {\skct} the degree of the circuit is at most the length of the input. 

In the~\emph{Sparse-{\cct}} problem 
the circuit $C$ has syntactic degree~$1$. 
This is equivalent to giving the input polynomial 
$f$ in sparse representation, 
i.e., where $f=\sum_{i=1}^s a_ix^{k_i}$ 
is encoded as a list of pairs of integers 
$(a_1,k_1),\ldots,(a_s,k_s)$.

Finally we consider {\cct} in case $C$ is a diagonal circuit
(see~\cite{saxena-diagonal}), that is, the input polynomial has the form
$f=\sum_{i=1}^m g_i^{d_i}$ where each $g_i$ is in sparse representation and 
$d_1,\ldots,d_m$ are integers represented in unary.  

The representation of polynomials in the {\cct} problem 
can be exponentially more succinct than in the {\skct} problem,
since the syntactic degree of a circuit can be exponential 
in its size. Likewise the representation 
in the {\skct} problem can be exponentially more succinct than in the {\sct} problem,
since the former allows the number of monomials to be exponential in the circuit size.

The problem {\sct} was first studied by Plaisted~\cite{plaisted}, who gave a
randomised polynomial-time algorithm.   
Subsequently, two different deterministic
polynomial-time algorithms were given by Cheng et al.~\cite{cheng-tv,cheng-focs}.
A natural approach to decide zeroness of $f(\zeta_n)$ 
is to compute an approximation of {sufficient precision}.
However, given  existing separation bounds for algebraic numbers, the precision required to distinguish between zero and a non-zero value 
precludes a polynomial-time bound via this method.

The conclusion of~\cite{cheng-tv}
raises the question of the complexity of {\cct}.  The authors note
that this problem lies in the counting hierarchy (which lies between {\NP} and {\PSPACE}),
based on results of~\cite{ABKM}. 
Our first main result is that {\cct} can be placed in
{\BPP} (i.e., randomized polynomial time)  assuming GRH, and is in {\coNP} unconditionally.
The algorithm works by computing modulo a suitable prime ideal in the
ring of integers of the number field $\mathbb{Q}(\zeta_n)$.  
%

\begin{restatable}{theorem}{theocct} 
\label{theo:cct}
The {\cct} problem is in {\BPP} assuming GRH, and is in {\coNP} unconditionally.
\end{restatable}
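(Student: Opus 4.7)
The plan is to reduce the ideal-theoretic question ``is $f(\zeta_n)=0$?'' to an arithmetic computation in a small finite field. If $p$ is a rational prime with $p\equiv 1\pmod n$, then $p$ splits completely in the ring of integers $\mathbb{Z}[\zeta_n]$ of $\mathbb{Q}(\zeta_n)$; for any prime ideal $\mathfrak{p}\mid p$ the residue field satisfies $\mathbb{Z}[\zeta_n]/\mathfrak{p}\cong\mathbb{F}_p$, with $\zeta_n$ mapping to a primitive $n$-th root of unity $\omega\in\mathbb{F}_p$. Hence $f(\zeta_n)\bmod\mathfrak{p}$ equals $f(\omega)\bmod p$, which is computed in polynomial time by simulating the circuit $C$ over $\mathbb{F}_p$, using fast exponentiation for the monomial input gates $x^e$. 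When $f(\zeta_n)=0$ this reduction vanishes for every such $p$; when $f(\zeta_n)\neq 0$, a ``bad'' prime $p$ (one giving a false positive) must divide the nonzero rational integer $N(f(\zeta_n))=\prod_{j\in(\mathbb{Z}/n)^*} f(\zeta_n^j)$.

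A gate-by-gate induction on $C$ bounds the complex magnitude $|f(\zeta_n^j)|$ by $2^{2^{O(s)}}$, where $s$ is the circuit size (additions at worst double, multiplications at worst square). Consequently $\log|N(f(\zeta_n))|\le \phi(n)\cdot 2^{O(s)}$, which is \emph{singly} exponential in the input length, and so bounds the number of bad rational primes.

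For the \BPP{} algorithm, under GRH the effective form of Dirichlet's theorem guarantees $\Omega(B/(\phi(n)\log B))$ primes $p\le B$ with $p\equiv 1\pmod n$ once $\log B$ is a suitable polynomial in the input length. Choosing $B$ large enough to dominate the bad-prime count, the algorithm samples a random such $p$ (via Miller--Rabin with rejection over the arithmetic progression), extracts a primitive $n$-th root of unity $\omega\in\mathbb{F}_p$ as $\omega=g^{(p-1)/n}$ for a random $g\in\mathbb{F}_p^*$ (of order exactly $n$ with probability $\phi(n)/n$, boosted by repetition), simulates $C$ in $\mathbb{F}_p$, and accepts iff the value is zero.

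For the unconditional \coNP{} bound, a verifier accepts the following witness of $f(\zeta_n)\neq 0$: a prime $p\equiv 1\pmod n$ of polynomial bit length with a Pratt primality certificate; the prime factorization $n=\prod_i q_i^{a_i}$ together with primality certificates for the $q_i$; an element $\omega\in\mathbb{F}_p$ accompanied by the checks $\omega^n\equiv 1$ and $\omega^{n/q_i}\not\equiv 1$ for every $i$, which together certify that $\omega$ has exact order $n$; and the circuit simulation witnessing $f(\omega)\not\equiv 0\pmod p$. Existence of a good $p$ of polynomial bit length follows unconditionally by combining Linnik's theorem on least primes in arithmetic progressions with the above norm bound. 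The main obstacle, in both parts, is securing a supply of primes $p\equiv 1\pmod n$ of polynomial bit length that avoids the singly exponential set of norm divisors; GRH handles this cleanly via effective Chebotarev in the \BPP{} argument, while the weaker existential requirement of the nondeterministic certificate is met by Linnik plus the explicit norm estimate.
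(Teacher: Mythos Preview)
Your approach is essentially the same as the paper's: reduce to $\mathbb{F}_p$ for a prime $p\equiv 1\pmod n$, evaluate the circuit there, and use the norm bound $|N(f(\zeta_n))|\le 2^{2^{O(s)}}$ to limit the number of bad primes. However, two steps in your sketch do not go through as written.

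\medskip
\textbf{The primitive-root step in the {\BPP} argument.} You write that $\omega=g^{(p-1)/n}$ has order exactly $n$ with probability $\phi(n)/n$, ``boosted by repetition.'' But $\phi(n)/n$ can be as small as $\Theta(1/\log\log n)$, so a single trial may have error probability well above $1/2$; and you cannot verify that a particular $\omega$ has order $n$ without the prime factorisation of $n$, which is not known to be obtainable in {\BPP}. When $\omega$ has order strictly dividing $n$ the test errs in \emph{both} directions (the paper exhibits an explicit example with $\Phi_{12}$ and $\Phi_6$), so neither rejection sampling nor majority voting over independent trials amplifies. The paper's fix is to filter on the \emph{small} prime divisors of $p-1$: sample $h\in\mathbb{F}_p^*$ uniformly subject to $h^{(p-1)/q}\neq 1$ for every prime $q\mid p-1$ with $q<10\log(p-1)$. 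A short conditional-probability argument (Proposition~\ref{prop:ZncondSn}) shows that such an $h$ is a primitive root of $\mathbb{F}_p^*$ with probability at least $0.9$, after which $\omega:=h^{(p-1)/n}$ genuinely has order $n$. An equivalent patch would filter on the small prime divisors of $n$ and test $\omega^{n/q}\ne 1$ directly.

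\medskip
\textbf{The existential step in the {\coNP} argument.} ``Linnik's theorem on least primes in arithmetic progressions'' gives you \emph{one} prime $p\equiv 1\pmod n$ of polynomial bit length, but the norm bound permits up to $2^{2s}$ bad primes, and nothing prevents the Linnik prime from being among them. What is actually needed is a \emph{count}: $\pi_{n,1}(x)>2^{2s}$ for some $x$ with $\log x$ polynomial in the input size. The paper obtains this from the unconditional lower bound $\pi_{n,1}(x)\geq C_2\, x/(\phi(n)\sqrt{n}\,\log x)$ valid once $x$ is a suitable power of $n$; this is a stronger statement than the least-prime formulation. Your certificate format---prime $p$ with Pratt certificate, factorisation of $n$ with primality certificates, order check $\omega^{n/q_i}\ne 1$, and the circuit evaluation---is perfectly valid once the existence of a good $p$ is secured by such a counting bound. (The paper's certificate instead factors $p-1$ and certifies a primitive root of $\mathbb{F}_p^*$; either variant works.)
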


Observe that the {\cct} problem is at least as hard as the 
Polynomial Identity Testing problem for circuits of unbounded degree, 
which is well-known to be {\Pt}-hard~\cite[Theorem 2.4.6, Theorem 2.6.3]{mittmann-thesis}.
As a result, algorithms for {\cct} are inherently sequential. 
We inspect two natural restrictions of {\cct} and show that they 
admit efficient parallel algorithms. 

The complexity class {\NC} formalises those polynomial-time-computable problems that are 
considered to be efficiently parallelizable.
Formally, a problem is in {\NC} if on instances of size $n$, it can be solved in $(\log n)^{O(1)}$
time using $n^{O(1)}$ processors in the PRAM model. Almost all natural problems in arithmetic~\cite{bch} and linear algebra are known to be in {\NC}~\cite{cook}. However, membership in {\NC} is open for 
GCD computation, modular powering, and primality testing.  (Several of these problems are, however, 
known to admit efficient \emph{randomized} {\NC} algorithms.)

We consider the {\skct} problem, in which the syntactic degree of the circuit is 
polynomially bounded, and 
give a randomized {\NC} procedure with two-sided errors.
Here we forsake the approach via finite arithmetic because computing powers in a 
finite field is not known to be in {\NC}.
Instead, we 
follow the identity testing method of Chen and Kao~\cite{chen-kao}: we pick a 
Galois conjugate of $f(\zeta_n)$ 
uniformly at random and determine
the zeroness of the conjugate by numerical computation.  
Thus we have:

\begin{restatable}{theorem}{theoskew} 
\label{theo:skew}
	The {\skct} problem is in randomized {\NC}.
\end{restatable}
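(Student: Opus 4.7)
The plan is to follow the Chen--Kao identity-testing strategy, adapted to the cyclotomic setting. Pick $a$ uniformly at random from $(\mathbb{Z}/n\mathbb{Z})^*$; this specifies a Galois automorphism $\sigma_a$ of $\mathbb{Q}(\zeta_n)/\mathbb{Q}$ sending $\zeta_n \mapsto \zeta_n^a$. The algorithm computes a numerical approximation of $f(\zeta_n^a)=\sigma_a(f(\zeta_n))$ to polynomial precision, and outputs ``zero'' if and only if the approximation is within the precision threshold of $0$.

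\textbf{Soundness and the key separation bound.} Because $f$ has integer coefficients, $\sigma_a$ preserves zeroness: $f(\zeta_n)=0$ iff $f(\zeta_n^a)=0$ for every $a$ coprime to $n$. In the non-zero case, $f(\zeta_n)$ is a non-zero algebraic integer, so its norm satisfies $\prod_{a\in(\mathbb{Z}/n\mathbb{Z})^*}|f(\zeta_n^a)|\geq 1$. Each factor is bounded above by the $\ell^1$-norm of the coefficients of the polynomial computed by $C$, which is at most $2^B$ for some polynomial $B$ (a circuit of polynomial size and polynomial syntactic degree has coefficients of polynomial bit length). Applying Markov's inequality to the nonnegative random variable $B-\log|f(\zeta_n^a)|$ then gives $|f(\zeta_n^a)|\geq 2^{-B}$ with probability at least $1/2$. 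Approximating $f(\zeta_n^a)$ to $B+O(1)$ bits of precision therefore distinguishes the two cases with constant probability, which is amplified by parallel repetition.

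\textbf{NC implementation.} Two ingredients must be placed in NC. (i) For each $i$, compute $\zeta_n^{a e_i \bmod n}=\cos(2\pi a e_i/n)+i\sin(2\pi a e_i/n)$ to polynomial precision: arithmetic modulo $n$ is in NC, and the trigonometric values come from truncated Taylor series, a parallel-prefix computation of polynomially many terms, each a rational of polynomial bit length. (ii) Evaluate $C$ at these substituted complex numbers: since $C$ has polynomial size $s$ and polynomial syntactic degree $D$, the Valiant--Skyum--Berkowitz--Rackoff depth-reduction theorem rewrites it as an equivalent arithmetic circuit of polynomial size and depth $O(\log s\cdot\log D)$; each arithmetic gate on polynomial-precision complex numbers can be executed in polylog depth using standard NC algorithms for integer addition and multiplication, so the composed depth is still polylogarithmic. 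The uniform random choice of $a$ and the final threshold comparison are trivially in NC.

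\textbf{Main obstacle.} The subtlest step is the numerical error analysis along the depth-reduced circuit: one must check that the cumulative rounding and truncation error stays below $2^{-B-O(1)}$ so that starting with input precision polynomial in $B$ is enough, and one must verify that the NC blocks (Taylor series for $\sin/\cos$, integer arithmetic at each gate, VSBR skeleton) compose without a hidden polynomial blow-up in depth. The numerical analysis and the parallelisation are each classical in isolation, but they have to be dovetailed carefully to keep the overall computation in polylogarithmic depth with polynomial precision.
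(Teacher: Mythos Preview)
Your overall strategy coincides with the paper's: pick a random Galois conjugate, approximate it numerically to polynomial precision, use the Chen--Kao separation bound (your Markov-inequality argument is exactly that lemma), and evaluate the bounded-degree circuit via VSBR depth reduction. The numerical and parallelisation pieces you describe are correct.

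There is, however, one genuine gap. You write that ``the uniform random choice of $a$'' from $(\mathbb{Z}/n\mathbb{Z})^*$ is ``trivially in NC''. It is not known to be. The natural procedure---sample $a$ from $\{1,\dots,n-1\}$ and test $\gcd(a,n)=1$---requires a GCD computation, and placing integer GCD in NC is a long-standing open problem. Nor can you simply sample from $\{1,\dots,n-1\}$ and hope: $\varphi(n)/n$ can be as small as $\Theta(1/\log\log n)$ (take $n$ a primorial), so the non-coprime samples may dominate, and for those $a$ the value $f(\zeta_n^a)$ is unrelated to $f(\zeta_n)$ in either direction.

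The paper circumvents this by relaxing the coprimality test to one that \emph{is} in NC: accept $a$ only if it shares no prime divisor $p<10\log n$ with $n$ (trial division by polylog-many small primes). A short calculation (Proposition~\ref{prop:ZncondSn}) shows that, conditioned on passing this filter, $a$ is coprime to $n$ with probability at least $9/10$; the residual probability that $a$ is not a unit is absorbed into the error bound and is the source of the algorithm's two-sided error. You need to insert this step (or an equivalent NC-computable surrogate for coprimality) to close the argument.
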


Moving to the problem {\sct}, 
we revisit the approach of~\cite{cheng-tv}, who gave a polynomial-time
decision procedure.
Here we give a simpler reformulation of their method and, as a 
by-product, we observe that the problem
can be solved in {\NC}.

\begin{restatable}{theorem}{theosparse} 
\label{theo:sparse}
	The {\sct} problem is in {\NC}.
\end{restatable}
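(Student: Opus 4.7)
The plan is to revisit Cheng--Tarasov--Vall\'ee's polynomial-time algorithm~\cite{cheng-tv} and exhibit a linear-algebraic reformulation whose steps parallelise into~$\NC$. First, I would normalise the input in~$\NC$: reduce each exponent~$k_i$ modulo~$n$ by parallel integer division, sort the resulting pairs $(k_i\bmod n,a_i)$, and merge coefficients of equal residues using parallel prefix sums. The task then reduces to deciding whether a sparse polynomial $g(x)=\sum_{i=1}^s a_i x^{k_i}$, with distinct exponents $k_i\in\{0,\dots,n-1\}$, is divisible by $\Phi_n(x)$ in $\Q[x]$, equivalently whether the coefficient vector $(a_1,\ldots,a_s)$ lies in the kernel $K\subseteq\Q^s$ of the evaluation map $(b_i)\mapsto\sum_i b_i\zeta_n^{k_i}$.

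Second, I would describe $K$ by a polynomial-size generating set assembled from two sources: (i)~the \emph{coset identities} $\sum_{j=0}^{d-1}\zeta_n^{r+jn/d}=0$ for divisors $d\mid n$ and residues~$r$ whose full coset $\{r,r+n/d,\dots,r+(d-1)n/d\}$ lies inside $\{k_1,\ldots,k_s\}$; and (ii)~translates of cyclotomic polynomials $\Phi_e$ (for $e\mid n$) whose coefficient supports shift into $\{k_i\}$. Each such generator consumes at least $d$ distinct exponents from the support of~$g$, so only coset sizes $d\le s$ can contribute and the overall generating set has polynomial size. The required enumeration is in~$\NC$: for each candidate divisor $d\in\{2,\dots,s\}$ (handled in parallel), one checks $d\mid n$ by integer division, computes~$n/d$, sorts the~$k_i$ by residue modulo~$n/d$, and extracts the full cosets; the translates in~(ii) are located analogously. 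This builds a polynomial-size matrix~$M$ whose row span equals (the claimed generators of)~$K$. Membership of $(a_1,\ldots,a_s)$ in the row span of~$M$ is then a polynomial-size linear algebra question over~$\Q$, solvable in~$\NC$ by Csanky's algorithm.

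The main obstacle is establishing completeness of the generating set in the second step: one must show that it really spans all of~$K$, not a proper subspace. The worry is that a relation supported on the sparse set $\{k_i\}$ might arise only as a $\Q$-combination of ``virtual'' relations whose individual supports stray outside~$\{k_i\}$. Overcoming this is the combinatorial core of the argument: writing any witnessing factorisation $g=\Phi_n\cdot Q$, one must argue---exploiting the sparsity of~$g$---that $Q$ admits a representation whose support is polynomially bounded, so that the coefficients of~$Q$ can be folded into the linear system as additional unknowns. Carrying out this factorisation cleanly, so that all relevant translates can be located in parallel in $\NC$ without enumerating exponentially many candidates and without factoring~$n$, is the technical heart of the simplification of~\cite{cheng-tv} that yields the $\NC$ bound.
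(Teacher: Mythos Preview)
Your plan has a genuine gap at exactly the point you flag, and the resolution you sketch does not work. The route via a factorisation $g=\Phi_n\cdot Q$ with~$Q$ of polynomially bounded support fails: $Q$ has degree $\deg g-\varphi(n)$, which can be exponential in the input size, and there is no sparsity guarantee on~$Q$ even when~$g$ is $s$-sparse. Likewise, the known structure theorems for vanishing sums of roots of unity (R\'edei, de~Bruijn, Lam--Leung) say only that every relation is a $\Q$-combination of translates of prime coset sums $\sum_{j=0}^{p-1}\zeta_n^{r+jn/p}$ for $p\mid n$; the individual translates in such a combination need not have support contained in $\{k_1,\ldots,k_s\}$, so restricting to generators whose support lies inside the sparse set can, and does, lose relations. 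Closing this gap would require a structure theorem for vanishing sums \emph{with prescribed support}, which is essentially the problem you are trying to solve.

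The paper takes a different route that sidesteps the issue entirely. Rather than seeking generators of $K=V_n^{\boldsymbol k}$, it computes the \emph{orthogonal complement} $(V_n^{\boldsymbol k})^\perp\subseteq\Q^s$ compositionally. First it extracts in~{\NC} a partial factorisation $n=p_1^{e_1}\cdots p_\ell^{e_\ell}\,m$ with each $p_i\le s$ and every prime factor of~$m$ exceeding~$s$. The key lemma is that for coprime $n_1,n_2$ one has $(V_{n_1n_2}^{\boldsymbol k})^\perp=(V_{n_1}^{\boldsymbol k})^\perp\odot(V_{n_2}^{\boldsymbol k})^\perp$, where $\odot$ denotes the span of entrywise (Hadamard) products; this follows from the tensor isomorphism $\Q(\zeta_{n_1n_2})\cong\Q(\zeta_{n_1})\otimes\Q(\zeta_{n_2})$ together with a short row-space argument. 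For each prime-power factor $p_i^{e_i}$ with $p_i\le s$ the vanishing space has an explicit combinatorial description (residues mod $p_i^{e_i-1}$), and for~$m$ the only vanishing sum among $s$ distinct powers is zero, so $V_m^{\boldsymbol k}$ is just the kernel of the merge-residues-mod-$m$ map. All pieces live in~$\Q^s$, their orthogonal complements are computable in~{\NC}, and the iterated~$\odot$ is assembled by parallel prefix. This never requires reasoning about generators whose support must stay inside the sparse index set.
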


Theorems~\ref{theo:cct},~\ref{theo:skew}, and~\ref{theo:sparse} all take different approaches to the {\cct} problem: respectively
using finite arithmetic, numerical approximation, and multilinear algebra.
However it is interesting to note that all three approaches
involve computing a partial prime factorisation of the order of the root of unity (or some multiple thereof).

Intermediate between {\skct} and {\sct}, we consider CIT for 
diagonal circuits.  Here we observe that if CIT for diagonal circuits 
were solvable in 
polynomial time, then
polynomial identity testing (PIT) for algebraic circuits of size $s$ and 
degree $d$ would be sovlable in time polynomial in $s^{O(\sqrt{d})}$.  
However we have:

%
\begin{restatable}{theorem}{theodiamond}
\label{theo:diamond}
Assuming GRH, CIT can be solved in polynomial time on the class of 
polynomials of the form $f=\sum_{i=1}^m g^{d_i}$ with~$g$ in sparse representation and $d_i$ an integer
in unary.
\label{theo:diamond}
\end{restatable}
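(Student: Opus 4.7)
The plan is to reduce the problem to a test on a low-degree univariate polynomial and then exploit the fact that, in the YES case, the element $\alpha := g(\zeta_n)$ lies in a polynomial-degree subfield of $\mathbb{Q}(\zeta_n)$. Write $f(\zeta_n) = h(\alpha)$ with $h(y) := \sum_{i=1}^m y^{d_i}$. Because the $d_i$ are in unary, $h$ has polynomial degree $d$ and polynomially bounded coefficients, so I factor it over $\mathbb{Q}$ in deterministic polynomial time using the Lenstra-Lenstra-Lov\'asz algorithm, obtaining $h = c\prod_j h_j^{e_j}$ with each $h_j \in \mathbb{Z}[y]$ irreducible. The test $h(\alpha)=0$ reduces to deciding, for each $j$, whether $h_j(\alpha)=0$; since $h_j$ is irreducible this happens iff the minimal polynomial of $\alpha$ equals $h_j$ up to a scalar.

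For each candidate factor $h_j$, the key observation is that $h_j(\alpha)=0$ forces $[\mathbb{Q}(\alpha):\mathbb{Q}] = \deg h_j$, hence polynomial. Whether or not $h_j(\alpha)$ vanishes, the norm $N_{\mathbb{Q}(\alpha)/\mathbb{Q}}(h_j(\alpha))$ is bounded in absolute value by $(m\|g\|_1^d)^{\deg h_j}$, so it has polynomial bit length; if nonzero it is divisible by only polynomially many rational primes. Moreover, since $\mathbb{Q}(\alpha)\subseteq \mathbb{Q}(\zeta_n)$ is an abelian extension of $\mathbb{Q}$, in the YES case the field $K := \mathbb{Q}[y]/(h_j)$ is abelian and hence coincides with its Galois closure. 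Under GRH, effective Chebotarev applied to $K$---whose discriminant has polynomial bit length since $h_j$ does---produces in deterministic polynomial time a polynomially-small prime $p$ that splits completely in $K$ and avoids the bad primes; if no such prime is found within the expected polynomial bound, we conclude that $K$ is not abelian and hence $h_j(\alpha)\ne 0$. Over such a $p$, we have $h_j \equiv \prod_i (y - \beta_i) \pmod{p}$ with $D := \deg h_j$ distinct $\beta_i \in \mathbb{F}_p$.

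The identity $h_j(\alpha) = 0$ is then equivalent to the following statement: for some $i$, the element $\alpha = g(\zeta_n)$ maps to $\beta_i$ under reduction modulo a prime $\mathfrak{P}$ of $\mathbb{Z}[\zeta_n]$ above $p$. Since $\zeta_n \bmod \mathfrak{P}$ is a primitive $n$-th root of unity $\omega$ in some extension of $\mathbb{F}_p$, this is equivalent to the existence, for some $i$, of a primitive $n$-th root of unity $\omega \in \bar{\mathbb{F}}_p$ with $g(\omega) = \beta_i$. For each $i$ this is a characteristic-$p$ sparse identity check on the polynomial $g(x) - \beta_i \in \mathbb{F}_p[x]$, decidable in deterministic polynomial time by adapting the techniques underlying Theorem~\ref{theo:sparse}. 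Aggregating over all $j$ and $i$ yields the claimed deterministic polynomial-time algorithm.

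The main technical obstacle is the coordination between the small subfield $\mathbb{Q}(\alpha)$, where effective Chebotarev is usable under GRH, and the ambient cyclotomic field $\mathbb{Q}(\zeta_n)$, whose degree $\phi(n)$ is exponential in the input size. GRH is applied only to the polynomial-degree extension $K/\mathbb{Q}$; applied directly to $\mathbb{Q}(\zeta_n)/\mathbb{Q}$ it would be useless because $\log \operatorname{disc}(\mathbb{Q}(\zeta_n))$ is exponential in the input. The sparse-CIT-style check in characteristic $p$ then serves as the bridge between the two fields, allowing us to verify the matching $g(\omega) = \beta_i$ without ever instantiating a primitive $n$-th root of unity in $\mathbb{F}_{p^k}$ or performing arithmetic in the exponential-dimensional residue ring $\mathbb{F}_p[x]/\Phi_n(x)$.
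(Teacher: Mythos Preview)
Your approach has a genuine circularity in the norm bound.  You claim that ``whether or not $h_j(\alpha)$ vanishes, the norm $N_{\mathbb{Q}(\alpha)/\mathbb{Q}}(h_j(\alpha))$ is bounded in absolute value by $(m\|g\|_1^d)^{\deg h_j}$.''  But the exponent in this bound must be $[\mathbb{Q}(\alpha):\mathbb{Q}]$, not $\deg h_j$, and these coincide only when $h_j$ \emph{is} the minimal polynomial of~$\alpha$---which is exactly what you are trying to decide.  In the case $h_j(\alpha)\neq 0$ the degree $[\mathbb{Q}(\alpha):\mathbb{Q}]$ can be as large as $\varphi(n)$, the norm $N_{\mathbb{Q}(\zeta_n)/\mathbb{Q}}(h_j(\alpha))$ can be of magnitude $2^{2^{O(s)}}$, and hence there can be exponentially many primes $p$ for which $h_j(\alpha)\equiv 0 \pmod{\mathfrak{P}}$ for some $\mathfrak{P}\mid p$.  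Your effective-Chebotarev prime, chosen only to split in $K=\mathbb{Q}[y]/(h_j)$ and to avoid a polynomial list of ``bad'' primes, has no reason to avoid these, so the mod-$p$ test can return a false positive.  The soundness direction of your equivalence ``$h_j(\alpha)=0$ iff $g(\omega)=\beta_i$ for some $\omega,i$'' therefore fails.

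There is also a secondary gap: the assertion that ``for each $i$ this is a characteristic-$p$ sparse identity check \dots\ decidable in deterministic polynomial time by adapting the techniques underlying Theorem~\ref{theo:sparse}'' is not justified.  Over $\bar{\mathbb{F}}_p$ the cyclotomic polynomial $\Phi_n$ splits into many irreducible factors, and asking whether a sparse polynomial vanishes at \emph{some} primitive $n$-th root of unity (equivalently, whether $\gcd(g(x)-\beta_i,\Phi_n(x))\neq 1$ in $\mathbb{F}_p[x]$) is a different problem from {\sct}, which tests vanishing at \emph{all} primitive roots over~$\mathbb{Q}$.  The tensor decomposition used for Theorem~\ref{theo:sparse} relies on $\mathbb{Q}(\zeta_n)\cong\bigotimes_j\mathbb{Q}(\zeta_{p_j^{e_j}})$ being a field; the analogous object over $\mathbb{F}_p$ does not behave the same way.

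For comparison, the paper breaks the circularity by computing $[\mathbb{Q}(\alpha):\mathbb{Q}]$ \emph{before} attempting any zero test: under GRH the set $\{k\in\mathbb{Z}_n^*:k\le O(\log^2 n)\}$ generates $\mathbb{Z}_n^*$, so one can enumerate the Galois orbit of $\alpha=g(\zeta_n)$ using the deterministic {\sct} algorithm (Theorem~\ref{theo:sparse}) to test equality of conjugates $g(\zeta_n^k)$.  If the orbit has more than $d=\max_i d_i$ elements then $h(\alpha)\neq 0$ trivially; otherwise the degree and height of $f(\zeta_n)$ are polynomially bounded, and a direct numerical approximation to polynomial precision decides zeroness via a root-separation bound.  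No reduction modulo a prime of $\mathbb{Z}[\zeta_n]$ is needed.
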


In terms of applications, we observe that cyclotomic identity testing can be used to 
obtain a new randomized {\NC} algorithm to decide equality of compressed strings, that is, strings presented 
by acyclic context-free grammars (see Section~\ref{sec:related} for previous work on this problem.)

\subsection{Related work}
\label{sec:related}

As discussed in~\cite{cheng-tv}, cyclotomic identity testing is related to the so-called \emph{torsion-point problem},
which asks whether a given multivariate polynomial has a zero in which all components are roots of unity~\cite{Rojas07}.
The univariate version of this problem is known to be {\NP}-hard~\cite{plaisted}.

Theorem~\ref{theo:cct} is a generalisation of the problem of testing equality of straight-line programs over the integers as studied by Sch{\"o}nhage~\cite{schonhage} (see also Allender et al.~\cite{ABKM}) to 
cyclotomic number fields. Testing zeroness of expressions involving real roots of rational 
numbers is considered in~\cite{blomer}.

Lenstra~\cite{LL} and Kaltofen and Koiran~\cite{kaltofen-koiran} gave 
polynomial-time algorithms for testing zeroness of sparse univariate and 
multivariate polynomials respectively on algebraic numbers of degree 
polynomially bounded in the problem instance (whereas {\sct} features 
roots of unity whose degree can be exponential in the size of 
the problem instance).

There has been extensive work on the problem of testing equality of compressed strings, including 
Hirschfield et al.~\cite{HJM} ($O(n^4)$ time), Plandowski~\cite{plandowski} and Melhorn et al.~\cite{MSU} ($O(n^3)$ time),
and Je\.{z}~\cite{jez} ($O(n^2)$ time).  Note that the quadratic running time in the latter is in a RAM model where 
arithmetic on integers (the binary encoding of a position in the uncompressed string 
fits into a single machine word) can be performed in a single time step. 

K\"{o}nig and Lohrey~\cite{konig-lohrey}  show that the problem admits a randomised {\NC} algorithm by
reduction to the identity testing problem for univariate polynomials given as so-called powerful skew circuits.
The main contribution of~\cite{konig-lohrey} is a randomised {\sf NC} algorithm for the latter problem.
Following the identity testing
 technique of Agrawal and Biswas~\cite{agrawal-biswas},
their algorithm works,
by computing the value of the circuit
modulo a randomly chosen polynomial $p(x)$.  In order to perform this computation in {\NC} they rely on the result of Fich and Tompa~\cite{fich-tompa} that
computing $x^m \bmod p(x)$ for large powers $m$ can be done in {\NC} (assuming $p$ is given in dense representation).  
By contrast, we observe that the same identity testing problem 
can be solved by 
numerically evaluating a polynomial at a randomly chosen conjugate of a root of unity $\zeta_n$ of sufficiently high order.  
To obtain an {\NC} bound we rely on the fact that it is straightforward to compute powers of~$\zeta_n$. We also observe that our technique yields a randomised sequential algorithm that runs in $\widetilde{O}(n^2)$ time in the standard Turing machine model.

\section{Preliminaries}\label{sec:pre}
We give some background results on arithmetic and cyclotomic fields.
Throughout the paper
we use $\log{x}$ to denote $\log_2{x}$. 

\begin{restatable}{proposition}{propZncondSn} 
\label{prop:ZncondSn}
Fix $m\in\mathbb{N}$ and
consider drawing an element $k$ uniformly at random from the set $\{1,\ldots,m-1\}$.
Let $A$ be the event that $k$ and $m$ are coprime
and let $B$ be the event that $k$ and $m$ share no common prime divisor $p< 10\log m$.  
Then $\mathrm{Pr}(A\mid B)>\frac{9}{10}$ for $m$ sufficiently large.
\end{restatable}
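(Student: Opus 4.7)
The plan is to exploit the inclusion $A \subseteq B$, which gives $\mathrm{Pr}(A\mid B) = \mathrm{Pr}(A)/\mathrm{Pr}(B)$, and to compute both probabilities via the Chinese Remainder Theorem. Partition the prime divisors of $m$ into ``small'' ones $P = \{p\mid m : p < 10\log m\}$ and ``large'' ones $Q = \{p\mid m : p \geq 10\log m\}$. A direct CRT count gives $\mathrm{Pr}(A) = \varphi(m)/(m-1)$ and, when $P \neq \emptyset$, $\mathrm{Pr}(B) = \frac{m}{m-1}\prod_{p\in P}(1-1/p)$. Since $\varphi(m)/m$ factors as $\prod_{p\in P}(1-1/p)\cdot\prod_{p\in Q}(1-1/p)$, the small-prime factors cancel in the ratio, leaving
\[
\mathrm{Pr}(A\mid B) \;=\; \prod_{p\in Q}\left(1-\tfrac{1}{p}\right).
\]

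The task thus reduces to lower-bounding this product. Since each $p\in Q$ is at least $10\log m$ and $\prod_{p\in Q} p$ divides $m$, one obtains
\[
|Q| \;\leq\; \frac{\log m}{\log(10\log m)}.
\]
Applying the elementary inequality $\prod_i(1-x_i)\geq 1-\sum_i x_i$ for $x_i \in [0,1]$ then yields
\[
\mathrm{Pr}(A\mid B) \;\geq\; 1 - \frac{|Q|}{10\log m} \;\geq\; 1 - \frac{1}{10\log(10\log m)},
\]
which exceeds $9/10$ for all sufficiently large $m$.

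I foresee no serious obstacle; the only mild subtlety is handling the degenerate cases. If $Q = \emptyset$ the product equals $1$ trivially, and if $P = \emptyset$ then $B$ is the whole sample space and $\mathrm{Pr}(A\mid B) = \varphi(m)/(m-1) \geq \prod_{p\in Q}(1-1/p)$, so the same bound still applies. The slight arithmetic gap between using the sample space $\{1,\ldots,m-1\}$ versus $\{1,\ldots,m\}$ is of order $1/m$ and is absorbed into the ``$m$ sufficiently large'' hypothesis.
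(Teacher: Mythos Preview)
Your proof is correct and follows essentially the same route as the paper: both reduce $\mathrm{Pr}(A\mid B)$ to the product $\prod_{p\in Q}(1-1/p)$ over the large prime divisors of $m$, and then lower-bound that product. Your CRT count is in fact a bit more careful than the paper's (which asserts mutual independence of the events $E_i$ on $\{1,\ldots,m-1\}$, only approximately true), and your final estimate $1-\frac{1}{10\log(10\log m)}\to 1$ is sharper than the paper's $(1-\tfrac{1}{10\log m})^{\log m}\to e^{-1/10}$, but the argument is otherwise the same.
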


\begin{proof}
Write $m=p_1^{e_1}\cdots p_r^{e_r}$, 
where $p_1,\ldots,p_r$ are distinct primes and $e_1,\ldots,e_r \geq 1$.
For $i=1,\ldots,r$, let $E_i$ be the event that $p_i$ does not divide the sampled number $k$.
Then the collection of events~$E_i$ is	mutually independent, $A=\bigcap_{i=1}^r E_i$, and
$B=\bigcap_{i:p_i<10\log m} E_i$.  Thus
\begin{eqnarray*} 
\mathrm{Pr}(A \mid B) &= & \frac{\mathrm{Pr}(A)}{\mathrm{Pr}(B)} \; = \prod_{i:p_i \geq 10\log m} \mathrm{Pr}(E_i)\\
                      &=& \prod_{i:p_i \geq 10\log m} \left(1-\frac{1}{p_i}\right) \\
      	     	     	&\geq & \left(1-\frac{1}{10\log m}\right)^{\log m} \, .
\end{eqnarray*}
Since the expression above converges to $e^{-0.1} > 0.9$ as $m$ tends to infinity, 
for sufficiently large $m$ we have $\mathrm{Pr}(A \mid B) > \frac{9}{10}$.
\end{proof}

The following estimates on the density of primes in an arithmetic progression can be found in~\cite[Chapter 20, page 125]{davenport} and~\cite[Corollary 18.8]{IKbook} respectively.

\begin{theorem}\label{theo-prime}
Given $a\in \Zn^*$, 
write $\pi_{n,a}(x)$ for the number of primes less than $x$ that are congruent to~$a$ modulo~$n$.
Then under GRH, there is an absolute constant $C>0$ such that
$$\pi_{n,a}(x) \geq  \frac{x}{\varphi(n) \, \log x} - C\sqrt{x}\log(x).$$
Unconditionally, 
there exist effective absolute positive 
constants $C_1$ and $C_2$ such that
for all $n < C_1x^{C_1}$,
$$ \pi_{n,a}(x) \geq \frac{C_2 x}{\varphi(n) \sqrt{n}\, \log(x) }\, . $$
\end{theorem}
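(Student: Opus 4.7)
The plan is to adapt the finite-arithmetic strategy behind Theorem \ref{theo:cct} and derandomise it using the polynomial degree of $F(y) := \sum_{i=1}^m y^{d_i}$ (polynomial because the $d_i$ are given in unary). Setting $\alpha := g(\zeta_n) \in \mathbb{Z}[\zeta_n]$ so that $f(\zeta_n) = F(\alpha)$, the task reduces to deciding $F(\alpha) \stackrel{?}{=} 0$ in $\mathbb{Q}(\zeta_n)$.

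The algorithm enumerates primes $p \equiv 1 \pmod n$ in a polynomial range, guaranteed to be rich enough by Theorem \ref{theo-prime} under GRH. For each candidate $p$ it constructs an element $\omega \in \mathbb{F}_p^*$ of order exactly $n$, so that $\zeta_n \mapsto \omega$ is a valid reduction modulo a completely split prime above $p$. It then computes $\alpha' := g(\omega) \in \mathbb{F}_p$ by summing the polynomially many monomials of $g$, and evaluates $F(\alpha') = \sum_{i=1}^m (\alpha')^{d_i}$ by iterated multiplication---polynomial time because $D := \max_i d_i$ is polynomial. The algorithm outputs \emph{non-zero} as soon as some candidate yields $F(\alpha') \not\equiv 0 \pmod p$, and \emph{zero} otherwise.

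Correctness hinges on controlling the ``bad'' primes, namely those for which the reduction of $F(\alpha)$ to $\mathbb{F}_p$ vanishes for every admissible $\omega$. Expanding $F(\alpha) = \sum_j c_j \zeta_n^j$ in the standard $\mathbb{Z}$-basis $\{1, \zeta_n, \ldots, \zeta_n^{\varphi(n)-1}\}$, such a prime must divide the content $\gcd_j c_j$; the height $\max_j |c_j|$ admits a polynomial bound in the input via standard height inequalities for cyclotomic integers, starting from $|\sigma(F(\alpha))| \leq m \|g\|_1^D$ across Galois conjugates $\sigma$. For any non-bad $p$, the image $P(x) := \sum_j c_j x^j \bmod p$ is a nonzero element of $\mathbb{F}_p[x]$ of degree less than $\varphi(n)$, so at most $\varphi(n)-1$ of the $\varphi(n)$ primitive $n$-th roots of unity in $\mathbb{F}_p$ are roots of $P$, leaving some $\omega$ that witnesses $F(\alpha') \not\equiv 0 \pmod p$.

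The main obstacle is the deterministic location of a suitable $\omega$ for each candidate $p$. Constructing an element of order exactly $n$ standardly uses $\omega := h^{(p-1)/n}$ for a primitive root $h$ of $\mathbb{F}_p^*$, whose polylogarithmic size is a GRH-conditional fact, but verifying the order of the resulting $\omega$ requires the prime factorisation of $n$, and ensuring additionally that $P(\omega) \neq 0$ adds a combinatorial search. Both issues can be handled by reducing to prime-power cyclotomic fields via the decomposition $\mathbb{Q}(\zeta_n) \cong \bigotimes_{q^k \| n} \mathbb{Q}(\zeta_{q^k})$: the condition $F(\alpha) = 0$ holds in $\mathbb{Q}(\zeta_n)$ iff its image vanishes in every factor, and within each $\mathbb{Q}(\zeta_{q^k})$ the factorisation of $n = q^k$ is trivial, whilst a polynomial-size enumeration of candidate $\omega$'s suffices to find one escaping the roots of the image polynomial.
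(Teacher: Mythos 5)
Your proposal does not address the stated theorem at all. Theorem~\ref{theo-prime} is a classical result in analytic number theory: two lower bounds on $\pi_{n,a}(x)$, the counting function for primes in the arithmetic progression $a \bmod n$ --- one conditional on GRH and one unconditional. The paper does not prove this theorem; it cites it directly from Davenport's \emph{Multiplicative Number Theory} (for the GRH-conditional bound, which follows from the explicit formula and the zero-free region implied by GRH for Dirichlet $L$-functions) and from Iwaniec--Kowalski, Corollary~18.8 (for the unconditional lower bound in the regime $n < C_1 x^{C_1}$, which is a Linnik-type estimate). Any proof of this statement would have to engage with Dirichlet characters, $L$-functions, and zero-density or explicit-formula arguments.

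What you have written instead is an algorithmic sketch for cyclotomic identity testing on polynomials of the form $f = \sum_{i=1}^m g^{d_i}$, i.e., material in the spirit of Theorem~\ref{theo:diamond}. You reduce to evaluating $F(\alpha)$ with $\alpha = g(\zeta_n)$ in a finite field $\mathbb{F}_p$ for primes $p \equiv 1 \pmod{n}$, appeal to Theorem~\ref{theo-prime} to guarantee such primes exist in a polynomial range, and then attempt to derandomise the choice of a primitive $n$-th root of unity modulo $p$ via the tensor decomposition $\mathbb{Q}(\zeta_n) \cong \bigotimes \mathbb{Q}(\zeta_{q^k})$. That is, your argument \emph{uses} Theorem~\ref{theo-prime} as a black box rather than proving it. You appear to have mistaken the label for that of a different result, so the proposal is entirely off-target for the statement in question.
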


Recall that $\alpha\in\mathbb{C}$ is 
an \emph{algebraic number} if it is the root of a non-zero polynomial in $\mathbb{Q}[x]$.  The \emph{minimal polynomial} of $\alpha$ is the unique monic polynomial in $\mathbb{Q}[x]$ having $\alpha$
as a root.  If the minimal polynomial has integer coefficients then we say that $\alpha$ is an 
\emph{algebraic integer}.  The degree and height of an algebraic integer are respectively the degree and the maximum absolute value of a coefficient in its minimal polynomial.

Fix $n\in\mathbb{N}$ and write $\mathbb{Q}(\zeta_n)$ for the
the field obtained by adjoining a primitive complex $n$-th root of
unity~$\zeta_n=e^{\frac{2\pi i}{n}}$ to~$\mathbb{Q}$.
The $n$-th \emph{cyclotomic polynomial} is the minimal polynomial of $\zeta_n$.
It is well known that $\Phi_n$ has degree $\varphi(n)$,
where $\varphi$ is the Euler totient function. 

It is well known that the sub-ring 
of $\mathbb{Q}(\zeta_n)$ comprised of algebraic integers, called the sub-ring of cyclotomic integers,  is the ring $\mathbb{Z}[\zeta_n]$ that is generated over $\mathbb{Z}$ by $\zeta_n$.

Recall that 
the group $\mbox{Gal}(\Q(\zeta_n)/\Q)$ of automorphisms of
$\Q(\zeta_n)$ is isomorphic to the multiplicative group~ $\Zn^*$ 
of integers mod $n$.
For each $k\in \Zn^*$, the corresponding automorphism $\sigma$ in $\mbox{Gal}(\Q(\zeta_n)/\Q)$ is defined by $\sigma(\zeta_n)=\zeta_n^k$.

The image of $\alpha \in \Q(\zeta_n)$ under an automorphism of 
$\Q(\zeta_n)$ is called a \emph{Galois conjugate} of $\alpha$.
The \emph{norm} of $\alpha$ is defined by
\[
N_{\Q(\zeta_n)/\Q}(\alpha) := \prod_{\sigma \in \mbox{Gal}(\Q(\zeta_n)/\Q)} \sigma(\alpha) \, 
\]
For short, we will write $N(\alpha)$ for $N_{\Q(\zeta_n)/\Q}(\alpha)$,
i.e., the underlying field will be understood from the context.
Recall that the norm of a cyclotomic integer lies in $\mathbb{Z}$.

\section{A Randomised Polynomial-time Algorithm for \cct}\label{sec:bbp}
In this section  we give a randomised polynomial-time algorithm for the {\cct} problem, assuming GRH.
Furthermore, we show that the  problem is in {\coNP} unconditionally.
The idea is to work in a finite field, obtained by quotienting 
the ring of cyclotomic integers by a suitable prime ideal.


Throughout this section, 
we say that the cyclotomic integer $f(\zeta_n)$ has \emph{a description of 
   size~$s \in \mathbb{N}$} if it is computed by 
 a circuit $C$  
 where 
 $s$ is  the sum of the size of $C$ and the bit-length of $n$.


\begin{restatable}{proposition}{propbound} 
\label{prop:bound}
Let $\alpha \in \Zz$ be a cyclotomic integer  with  a description
of size~$s$.  Then we have~$|N(\alpha)| \leq 2^{2^{2s}}$.
\end{restatable}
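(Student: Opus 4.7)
The plan is to bound the norm by first bounding the modulus of each Galois conjugate, and then multiplying over the Galois group. Since every Galois conjugate of $\alpha=f(\zeta_n)$ has the form $f(\zeta_n^k)$ for some $k$ and $|\zeta_n^k|=1$, I will bound $|f(z)|$ for arbitrary $z\in\mathbb{C}$ with $|z|=1$.

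First I would show by structural induction on the circuit $C$ that for every $z$ on the unit circle and every gate $v$ of $C$, if $s_v$ denotes the size of the sub-circuit rooted at $v$ (edges plus bit-lengths of edge weights), then $|C_v(z)|\leq 2^{2^{s_v}}$. The base case, where $v$ is a monomial input $x^e$, is immediate since $|z^e|=1$. For a multiplication gate with children $u_1,u_2$, the modulus bound is multiplicative, giving $|C_v(z)|\leq 2^{2^{s_{u_1}}+2^{s_{u_2}}}$; since the sub-circuit at $v$ adds two incoming edges one has $s_v\geq s_{u_1}+s_{u_2}+2$, and then $2^{s_{u_1}}+2^{s_{u_2}}\leq 2^{s_{u_1}+s_{u_2}+1}\leq 2^{s_v}$, which absorbs the exponents. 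For a weighted addition gate $v=\sum_i w_i u_i$ with weights of bit-lengths $b_i$, the bound $|C_v(z)|\leq \sum_i 2^{b_i}\cdot 2^{2^{s_{u_i}}}$ can similarly be absorbed into $2^{2^{s_v}}$ because both the bit-lengths $b_i$ and the number of children contribute additively to $s_v$, while $2^{s_v}$ grows multiplicatively in these quantities.

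Second, the norm is a product over $\mathrm{Gal}(\mathbb{Q}(\zeta_n)/\mathbb{Q})$, which has cardinality $\varphi(n)<n\leq 2^s$ since the bit-length of $n$ is at most $s$. Applying the per-conjugate bound (with $s_v\leq s$ at the output gate) to each of the at most $2^s$ factors yields
\[
|N(\alpha)|\;\leq\;\bigl(2^{2^{s}}\bigr)^{2^{s}}\;=\;2^{2^{s}\cdot 2^{s}}\;=\;2^{2^{2s}},
\]
as claimed. The main obstacle I expect is calibrating the induction cleanly for weighted addition gates, where one must simultaneously account for the fan-in, for the bit-lengths of the integer weights, and for the doubly exponential bounds inherited from the children, all within the single size parameter $s_v$.
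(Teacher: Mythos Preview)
Your overall strategy—bound $|\sigma(\alpha)|$ uniformly over all Galois conjugates and then multiply over $\varphi(n)<2^s$ factors—matches the paper's exactly. The paper phrases the per-conjugate bound as $H:=\sum_j|a_j|\leq 2^{2^s}$ where $\alpha=\sum_j a_j\zeta_n^j$, and then uses $|\sigma(\alpha)|\leq H$.

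There is, however, a genuine gap in your inductive step. The inequality $s_v\geq s_{u_1}+s_{u_2}+2$ presupposes that the sub-circuits rooted at $u_1$ and $u_2$ are \emph{disjoint}. In a DAG this fails: for a squaring gate $v=u\times u$ one has $u_1=u_2=u$ and hence $s_v=s_u+2$, not $2s_u+2$. The same problem arises at addition gates whose children share sub-expressions, so the ``absorption'' you sketch there breaks down for the same reason. This is precisely the situation where circuits beat formulas, and your induction as written only proves the bound for formulas.

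The bound $|C(z)|\leq 2^{2^{s}}$ at the output is nonetheless true; it just needs a global rather than a per-gate accounting. Process the gates $v_1,\dots,v_t$ in topological order and set $L_i:=\log_2\max\bigl(1,\max_{j\leq i}|C_{v_j}(z)|\bigr)$. A multiplication gate of fan-in $e$ yields $L_i\leq e\,L_{i-1}$; an addition gate whose weights have total bit-length $b$ yields $L_i\leq L_{i-1}+b$. Unrolling, $L_t\leq \bigl(\prod_{\text{mult}} e_j\bigr)\cdot\sum_{\text{add}} b_i\leq 2^{E}B$, where $E$ is the number of edges into multiplication gates and $B$ the total bit-length of all addition weights. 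Since edge counts and bit-lengths are disjoint contributions to the circuit size, $E+B\leq s$, whence $2^{E}B\leq 2^{E}(s-E)\leq 2^{E}\cdot 2^{s-E}=2^{s}$. With this per-conjugate bound in hand, your final norm estimate goes through unchanged.
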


\begin{proof}
Write $\alpha = \sum_{j=0}^{n-1} a_j \zeta_n^j$, where $a_0,\ldots,a_{n-1} \in \mathbb{Z}$
and let
$H := \sum_{0\leq i \leq n-1} |a_i|$. Since $\alpha$ is computed by a circuit of size~$s$, 
by an easy induction on~$s$
we have $H \leq 2^{2^s}$. Then:
\begin{align*}
N(\alpha)\, & =\, N\left(\sum_{j=0}^{n-1} a_j \zeta_n^j\right) \\
\,& \, = \prod_{\sigma \in \mbox{Gal}(\Q(\zeta_n)/\Q)} \sigma\left(\sum_{j=0}^{n-1} a_j \zeta_n^j\right) \\
&\, = \prod_{\ell \in \mathbb{Z}^*_{n}}\left(\sum_{j=0}^{n-1} a_j \zeta_n^{j\ell}\right). 
\end{align*}
Since $\left| \sum_{j=0}^{n-1} a_j \zeta_n^{j\ell} \right | \leq \sum_{i=0}^{n-1} |a_i| = H $ for all $\ell$,   we have
\[
|N(\alpha)|\,  \, \leq \prod_{\ell \in \mathbb{Z}^*_{n}} H 
\, \leq (2^{2^s})^{n} 
\, \leq (2^{2^s})^{2^s} 
 =  2^{2^{2s}}.\]
\end{proof}

\begin{theorem}
Let $p\in\mathbb{Z}$ be a prime such that  $\Zp$ contains a
primitive $n$-th root of unity~$\omega_n$.  Given $g(x) \in  
\mathbb{Z}[x]$, denoting by $\bar{g}\in \Zp[x]$ the reduction of 
$g$ modulo $p$, we have 
\begin{enumerate}
  \item  if $g(\zeta_n)=0$ then $\bar{g}(\omega_n)=0$, and
  \item if $\bar{g}(\omega_n)=0$ then $p                                                                               
\mid N(g(\zeta_n))$.
\end{enumerate}
\label{thm:soundness}
\end{theorem}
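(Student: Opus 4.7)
The plan is to derive both parts from a single preparatory observation: that the primitive root $\omega_n \in \mathbb{F}_p$ is a root of $\bar{\Phi}_n$, the reduction modulo $p$ of the $n$-th cyclotomic polynomial. The hypothesis that $\mathbb{F}_p$ contains a primitive $n$-th root of unity implies $n \mid p-1$, so in particular $p \nmid n$, and the factorisation $x^n - 1 = \prod_{d \mid n} \Phi_d(x)$ remains separable after reduction modulo $p$. Since $\omega_n$ has exact multiplicative order $n$, it cannot be a root of $\bar{\Phi}_d$ for any proper divisor $d$ of $n$ (such a root would satisfy $x^d = 1$), so it must be a root of $\bar{\Phi}_n$.

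For part (1), I would note that if $g(\zeta_n) = 0$ then $\Phi_n$, being the minimal polynomial of $\zeta_n$, divides $g$ in $\mathbb{Q}[x]$. Because $\Phi_n$ is monic with integer coefficients, Gauss's lemma upgrades this to a divisibility $\Phi_n \mid g$ in $\mathbb{Z}[x]$. Reducing modulo $p$ gives $\bar{\Phi}_n \mid \bar{g}$ in $\mathbb{F}_p[x]$, and evaluating at $\omega_n$ together with the preparatory observation yields $\bar{g}(\omega_n) = 0$.

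For part (2), the cleanest route is via resultants. Since the Galois conjugates of $\zeta_n$ are exactly the primitive $n$-th roots of unity, one has
\[
N(g(\zeta_n)) \;=\; \prod_{\zeta : \Phi_n(\zeta)=0} g(\zeta) \;=\; \operatorname{Res}(\Phi_n, g) \in \mathbb{Z}.
\]
Monicity of $\Phi_n$ ensures that the leading coefficient survives reduction mod $p$, so the resultant commutes with reduction: $\operatorname{Res}(\Phi_n, g) \bmod p = \operatorname{Res}(\bar{\Phi}_n, \bar{g})$ in $\mathbb{F}_p$. By the preparatory observation and the hypothesis, $\omega_n$ is a common root of $\bar{\Phi}_n$ and $\bar{g}$ in $\mathbb{F}_p$, so their resultant vanishes, i.e.\ $p \mid N(g(\zeta_n))$.

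None of these steps is deep; the main care is in invoking the right algebraic ingredients — Gauss's lemma to get divisibility in $\mathbb{Z}[x]$ rather than just $\mathbb{Q}[x]$, and the compatibility of the resultant with reduction mod $p$ when one factor is monic. If one prefers to avoid resultants, part (2) also admits a direct ideal-theoretic proof: the map $\mathbb{Z}[\zeta_n] \to \mathbb{F}_p$ sending $\zeta_n \mapsto \omega_n$ (well defined because $\omega_n$ is a root of $\bar{\Phi}_n$) has maximal kernel $\mathfrak{p}$ lying over $p$; the hypothesis $\bar{g}(\omega_n)=0$ means $g(\zeta_n) \in \mathfrak{p}$, whence $N(g(\zeta_n)) = g(\zeta_n) \prod_{\sigma \neq \mathrm{id}} \sigma(g(\zeta_n)) \in \mathfrak{p} \cap \mathbb{Z} = p\mathbb{Z}$.
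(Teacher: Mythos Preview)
Your proof is correct. The preparatory observation that $\omega_n$ is a root of $\bar{\Phi}_n$ is exactly what the paper establishes first (phrased there as $\mathrm{ev}(\Phi_n)=0$ for the evaluation map $\mathrm{ev}:\mathbb{Z}[x]\to\mathbb{F}_p$, $g\mapsto\bar g(\omega_n)$). Your alternative ideal-theoretic argument for part~(2) is precisely the paper's proof: the paper packages both items via the induced homomorphism $\mathrm{ev}':\mathbb{Z}[\zeta_n]\to\mathbb{F}_p$, noting that its kernel is a prime ideal $\mathfrak p$ over $p$, so $g(\zeta_n)\in\mathfrak p$ forces $N(g(\zeta_n))\in\mathfrak p\cap\mathbb{Z}=p\mathbb{Z}$.

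The genuine difference is your \emph{primary} route for part~(2) via the resultant identity $N(g(\zeta_n))=\operatorname{Res}(\Phi_n,g)$ and its compatibility with reduction mod~$p$. This is a more explicit, computational argument that avoids invoking the prime-ideal structure of $\mathbb{Z}[\zeta_n]$; the price is that one must check the resultant behaves well under reduction (which is fine here since $\Phi_n$ is monic). For part~(1) your Gauss's-lemma step and the paper's ``$\mathrm{ev}$ factors through $\mathbb{Z}[\zeta_n]$'' are the same fact in different clothing, since $\mathbb{Z}[\zeta_n]\cong\mathbb{Z}[x]/(\Phi_n)$ already encodes that $\Phi_n\mid g$ in $\mathbb{Z}[x]$ whenever $g(\zeta_n)=0$.
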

\begin{proof}
Define a ring homomorphism $\mathrm{ev}:\mathbb{Z}[x]\rightarrow \Zp$ by $\mathrm{ev}(g)=\bar{g}(\omega_n)$.
For $d<n$, since $\Phi_d \mid x^d-1$ and $\mathrm{ev}(x^d-1)\neq 0$, we have $\mathrm{ev}(\Phi_d)\neq 0$.
Since also $x^n-1=\prod_{d\mid n} \Phi_d$, we have $\mathrm{ev}(\Phi_n)=0$.  
It follows that $\mathrm{ev}$ factors through $\Zz$ via a homomorpishm $\mathrm{ev}':\Zz\rightarrow\Zp$ 
given by $\mathrm{ev}'(g(\zeta_n))=\bar{g}(\omega_n)$ for $g\in \mathbb{Z}[x]$. 

For Item 1, if~$g(\zeta_n)=0$ then~$\bar{g}(\omega_n)=\mathrm{ev}'(g(\zeta_n))=0$.

For Item 2, observe that the kernel of $\mathrm{ev}'$ is a prime ideal~$\pid$ in $\Zz$ satisfying $\pid \cap \mathbb{Z} = p\mathbb{Z}$.
Hence if $\bar{g}(\omega_n)=0$ then 
$g(\zeta_n) \in \pid$ and so $p \mid                                                                                              
N(g(\zeta_n))$.
\end{proof}

Theorem~\ref{thm:soundness} suggests a natural test for  \cct :
evaluate the circuit in a finite field~$\Zp$ that contains a primitive
$n$-th root of unity.  Since the multiplicative group
$\Zp^*$ is cyclic, it is clear that $\Zp^*$ contains a
primitive $n$-th root of unity just in case $n \mid (p-1)$, i.e., $p
\equiv 1 \bmod{n}$. 

\begin{proposition}\label{prop:comp}
Let $\alpha\in \Zz$ be a non-zero cyclotomic integer whose description has size
at most $s$.  Suppose that $p$ is chosen uniformly at random from $\{ q \in \mathbb{N} : q \leq 2^{5s} \text{ and }{q\equiv 1}\bmod n\}$.
Assuming GRH, (i)~$p$ is prime with probability at least $\frac{1}{6s}$, and
(ii)~given that $p$ is prime, the probability that it divides $N(\alpha)$ is at most $2^{-s}$.
\end{proposition}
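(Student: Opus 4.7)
The plan is to apply Theorem~\ref{theo-prime} under GRH (with $a=1$ and $x = 2^{5s}$) to lower-bound the number of primes $p \equiv 1 \pmod{n}$ in the sampling range, and then to combine this with the norm bound from Proposition~\ref{prop:bound} to upper-bound the number of ``bad'' primes that divide $N(\alpha)$. The exponent $5$ in $2^{5s}$ is what lets both parts go through with the stated constants.

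Setup. Since the description size $s$ includes the bit-length of $n$, we have $n \leq 2^s$. Write $x = 2^{5s}$ and $S = \{q \in \mathbb{N} : 1 \leq q \leq x,\ q \equiv 1 \pmod{n}\}$. A direct count gives $|S| = \lfloor (x-1)/n\rfloor + 1 \leq x/n + 1 \leq (1 + 2^{-4s}) \cdot x/n$ since $x/n \geq 2^{4s}$. Theorem~\ref{theo-prime} under GRH then yields
\[
\pi_{n,1}(x) \;\geq\; \frac{x}{\varphi(n)\cdot 5s} \;-\; 5Cs\cdot 2^{5s/2}.
\]

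For part~(i), I would divide the bound above by $|S|$ and use $n/\varphi(n) \geq 1$ together with $n \leq 2^s$ to obtain
\[
\frac{\pi_{n,1}(x)}{|S|} \;\geq\; \frac{1}{1+2^{-4s}}\left(\frac{n}{\varphi(n)\cdot 5s} - \frac{5Cs\cdot n}{2^{5s/2}}\right) \;\geq\; \frac{1}{1+2^{-4s}}\left(\frac{1}{5s} - \frac{5Cs}{2^{3s/2}}\right),
\]
whose right-hand side approaches $1/(5s)$ as $s$ grows and therefore exceeds $1/(6s)$ once $s$ is sufficiently large (small $s$ can be handled by padding the instance).

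For part~(ii), Proposition~\ref{prop:bound} gives $|N(\alpha)| \leq 2^{2^{2s}}$, so $N(\alpha)$ has at most $\log_2 |N(\alpha)| \leq 2^{2s}$ distinct prime divisors. Conditioning on $p$ being prime, which is uniform over the prime subset of $S$,
\[
\Pr\!\bigl[p \mid N(\alpha) \,\bigm|\, p\text{ prime}\bigr] \;\leq\; \frac{2^{2s}}{\pi_{n,1}(x)} \;=\; O\!\left(\frac{s}{2^{2s}}\right) \;\leq\; 2^{-s}
\]
for $s$ sufficiently large (concretely once $10s \leq 2^s$). The only real subtlety is verifying that the additive error $C\sqrt{x}\log x$ in Theorem~\ref{theo-prime} is genuinely dominated by the main term when $x = 2^{5s}$ and $\varphi(n) \leq 2^s$: the ratio of error to main is $O(s^2 \cdot 2^{-3s/2}) = o(1)$, which is precisely why the exponent $5$ suffices in both estimates.
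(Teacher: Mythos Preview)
Your proposal is correct and follows essentially the same approach as the paper: apply the GRH estimate of Theorem~\ref{theo-prime} with $x=2^{5s}$ and $a=1$, divide by $|S|\approx 2^{5s}/n$ for part~(i), and combine the $2^{2s}$ bound on prime divisors of $N(\alpha)$ from Proposition~\ref{prop:bound} with the lower bound on $\pi_{n,1}(2^{5s})$ for part~(ii). If anything you are slightly more careful than the paper---you track $|S|$ exactly rather than using $2^{5s}/n$, and your error-term bound $5Cs/2^{3s/2}$ is tighter than the paper's $5Cs\cdot 2^s/2^{2s}$ (the paper also has a harmless typo, writing ``at most'' where ``at least'' is meant).
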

\begin{proof}
For (i), we note that 
by Theorem~\ref{theo-prime}, the probability that $p$ is prime is
at most 
\[
\frac{\pi_{n,1}(2^{5s})}{2^{5s}/n} \geq  \frac{n}{\varphi(n)\log(2^{5s})}
- \frac{Cn\log(2^{5s})}{\sqrt{2^{5s}}} \\
 \geq  \frac{1}{5s}  - \frac{C2^s5s}{2^{2s}} \, ,
\]
where $C$ is the absolute constant mentioned in the theorem.   But the above
is at least $\frac{1}{6s}$ for $s$ sufficiently large.

For (ii),
by Proposition~\ref{prop:bound}
the norm of $\alpha$  has absolute value  at most $2^{2^{2s}}$,
and hence $N(\alpha)$ has at most $2^{2s}$ distinct prime factors.
Then, for $s$ sufficiently large, the probability that $p$ divides $N(\alpha)$ given that $p$ is prime is at most $\frac{6sn2^{2s}}{2^{5s}} \leq 2^{-s}$.
\end{proof}



\begin{figure}[t]
\begin{center}
\begin{tabular}{rp{.9\textwidth}}
 \hline
    \multicolumn{2}{c}{ \bf   Cyclotomic Identity Testing }  \\
    \hline
 & {\bf Input:} Algebraic circuit~$C$ and integer~$n$, written in binary, of combined size~$s$.
\\\hline
& {\bf Output:} Whether $f(\zeta_n)=0$ for the polynomial~$f(x)$ computed by~$C$.\\
  \hline
  1:& Pick $p$ uniformly at random from \[\{ q \in \mathbb{N}: q \leq 2^{5s},\, q \text{ prime},\text{ and }{q \equiv 1} \bmod n\}.\]\\[-.5cm]
  2:& Pick $h$ uniformly at random  from \[\{ a : a \in \Zp^*, \bigwedge_{\substack{2\leq q<10\log(p-1)\\ {q|p-1}}} a^{\frac{p-1}{q}} \neq 1\}.\]\\[-.5cm]
    3:& Set $\omega_n:=h^{\frac{p-1}{n}} \in \Zp^*$.\\
  4:& Output `Zero' if $\bar{f}(\omega_n)= 0$ where $\bar{f}$ is the reduction of~$f$ modulo~$p$; otherwise output
 `Non-Zero'.
\\
   \hline
\end{tabular}
\caption{Algorithm for the CIT problem}
\label{fig:algorithmBPP}
\end{center}
\vspace{-.5cm}
\end{figure}

A  straightforward application 
of Proposition~\ref{prop:ZncondSn} gives the following proposition,  
enabling us to find primitive $n$-th roots of unity in $\Zp$ in 
case ${p\equiv 1} \bmod n$.
\begin{proposition}
For a prime $p$, let
$h$ be chosen uniformly at random from the set 
\[ \Bigg\{a \in \Zp^*  :  \bigwedge_{\substack{2\leq q<10\log(p-1)\\ {q|p-1}}} a^{\frac{p-1}{q}} \neq 1\Bigg\} \, .\]
Then $h$ is a primitive root of  $\Zp^*$ with probability at least 0.9.
\label{prop:prim}
\end{proposition}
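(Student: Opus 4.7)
The approach is to translate the statement into the language of Proposition~\ref{prop:ZncondSn} via the discrete logarithm. Fix any primitive root $g$ of $\Zp^*$. Then the map $k \mapsto g^k$ is a bijection between $\{0, 1, \ldots, p-2\}$ and $\Zp^*$, and $g^k$ is itself a primitive root iff $\gcd(k, p-1) = 1$.

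Next I would rewrite the defining condition of the sampling set in terms of $k$. For $a = g^k$ and any prime $q$ dividing $p-1$, one has $a^{(p-1)/q} = g^{k(p-1)/q}$, which equals $1$ iff $(p-1) \mid k(p-1)/q$, i.e., iff $q \mid k$. In particular the element $a = 1$ (corresponding to $k = 0$) fails every such non-triviality condition, so the sampling set lies inside $\Zp^* \setminus \{1\}$; under the bijection this corresponds to $k \in \{1, \ldots, p-2\}$. Applying the observation for every small prime divisor $q$ of $p-1$ shows that the sampling set is in bijection with
\[
\{k \in \{1, \ldots, p-2\} : \text{no prime } q \mid p-1 \text{ with } q < 10\log(p-1) \text{ divides } k\},
\]
which is precisely the event $B$ of Proposition~\ref{prop:ZncondSn} taken with $m = p-1$. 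Likewise the set of primitive roots of $\Zp^*$ corresponds to the event $A$ with $m = p-1$.

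Finally, drawing $h$ uniformly from the sampling set is, under the bijection, the same as drawing $k$ uniformly from $\{1, \ldots, p-2\}$ conditional on $B$; hence
\[
\mathrm{Pr}(h \text{ is a primitive root}) \,=\, \mathrm{Pr}(A \mid B) \,>\, \tfrac{9}{10}
\]
by Proposition~\ref{prop:ZncondSn}, provided $p-1$ is sufficiently large. The only mild obstacle I anticipate is being careful about the boundary case $k = 0$ / $a = 1$ in the bijection; the finitely many small primes $p$ that fall outside the asymptotic regime can be handled by direct inspection or absorbed into the "$s$ sufficiently large" hypothesis under which Proposition~\ref{prop:comp} is applied in the main algorithm.
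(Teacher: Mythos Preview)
Your proof is correct and follows essentially the same approach as the paper: fix a primitive root $g$, use the discrete-logarithm bijection to translate the sampling condition into the event $B$ of Proposition~\ref{prop:ZncondSn} with $m=p-1$, identify primitive roots with event $A$, and apply $\mathrm{Pr}(A\mid B)>\tfrac{9}{10}$. Your treatment of the boundary case $k=0$ and the ``sufficiently large'' caveat is in fact slightly more careful than the paper's own write-up.
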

\begin{proof}
Fix a primitive root $g \in \Zp^*$. 
For $a$ distributed uniformly at random over $\Zp^*$,
we have that $\log_g a$ (the discrete logarithm of $a$ in $\Zp^*$) is distributed uniformly at random over $\{0,\ldots,p-2\}$.
Moreover, for every divisor $q$ of $p-1$, $q$ divides  
$\log_g a$ if and only if ${a^{\frac{p-1}{q}} = 1}\bmod p$.  It follows that 
for $h$ as in the statement of the proposition,
$\log_g h$ is distributed uniformly at random among those elements
in $\{2,\ldots,p-2\}$ that do not share a divisor less than $10 \log (p-1)$
with~$p-1$.  Applying Proposition~\ref{prop:ZncondSn} we have that $\log_g h$ is coprime with~$p-1$ with 
probability at least $0.9$.  But $\log_g h$ is coprime with $p-1$ if and only if $h$ is itself 
a primitive root of~$\Zp^*$.  
\end{proof}

We are now in a position to prove the main result of this section: 

\theocct*
\begin{proof}
Figure~\ref{fig:algorithmBPP} presents a Monte Carlo randomized
algorithm for the  \cct~problem. 
The argument for the correctness
of the algorithm is as follows.  Let $p$ be a prime such that~$p\equiv 1
\bmod{n}$, as chosen in Line~$1$.  

It follows from Proposition~\ref{prop:prim} that 
with probability at least $0.9$,
the element $h\in \Zp^*$ that is selected in Line~$2$ is a primitive root of $\Zp^*$.
Now let us bound the error of the algorithm under the assumption that $h$ is indeed a  primitive root of $\Zp^*$.  
Note that in this case we have that $\omega_n$, as chosen in Line 3, is a primitive $n$-th root of unity
in the field $\Zp$.  We consider two cases.
First, suppose that $f(\zeta_n)=0$; then by
Theorem~\ref{thm:soundness} we have $\bar{f}(\omega_n)=0$, and hence the
output is `Zero'.  Second, suppose that $f(\zeta_n)\neq 0$. Then by Theorem~\ref{thm:soundness}
the output will be `Non-Zero' provided that $p$ is does not divide
$N(f(\zeta_n))$.  But by Proposition~\ref{prop:comp}(ii)
the probability that $p$ does not divide $N(f(\zeta_n))$ is at least
$1-2^{-s}$.
Thus,  in total, the probability that the algorithm gives the wrong output is at most~$0.1 + 2^{-s}$.

It is clear that the algorithm runs in polynomial time.  In particular, in Line 1, by Proposition~\ref{prop:comp}(i) we can choose a prime uniformly 
at random from the set
$\left\{ q\in\N: q \leq 2^{5s},\, {q \equiv 1} \bmod n\right\}$
by random sampling with $O(s)$ repetitions with a small constant failure probability.


It remains to show that the CIT problem lies in {\coNP}.  The idea is to modify the algorithm in Figure~\ref{fig:algorithmBPP}, replacing randomisation with guessing.
Suppose $f(\zeta_n)\neq 0$.
The unconditional lower bound 
in Theorem~\ref{theo-prime} shows that
$\pi_{n,1}(2^{s'}) > 2^{2s}$ for $s$ sufficiently large and $s'>\max(4s, s/C_1)$.  It follows that there 
exists a prime $p \leq 2^{s'}$ that does not
divide $N(f(\zeta_n))$
such that 
${p\equiv 1}\bmod n$.  The polynomial certificate
of non-zeroness of $f(\zeta_n)$ then 
comprises, the above prime $p$, a list 
of the prime factors of $p-1$, and
an element $h \in \Zp^*$ such that 
$h^{\frac{p-1}{q}} \neq 1$ for all prime factors $q$ of 
$p-1$.  Such an $h$ is a generator  
of $\Zp^*$ and so $\omega_n:=h^{\frac{p-1}{n}}$
is a primitive $n$-th root of unity.  We then 
have that $\overline{f}(\omega_n)\neq 0$.  
On the other hand, as noted above, for any prime $p$ and primitive $n$-th root of unity
$\omega_n \in \Zp^*$, if $f(\zeta_n)=0$
then $\overline{f}(\omega_n)=0$.
%
\end{proof} 

\begin{corollary}
Assuming GRH, there is a Monte Carlo randomised algorithm that solves {\cct} for cyclotomic integer of description size $s$, using $O(s^2)$ random bits and $O(s^3)$ arithmetic operations.
\end{corollary}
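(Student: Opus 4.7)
The plan is a routine line-by-line accounting of the algorithm in Figure~\ref{fig:algorithmBPP}: for each of the four lines I bound separately the number of random bits drawn and the number of arithmetic operations (additions and multiplications) in $\Zp$ performed, and then sum the contributions.

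For Line~1 I sample candidates of the form $q := jn+1$ with $j$ drawn uniformly from $\{1,\ldots,\lfloor(2^{5s}-1)/n\rfloor\}$, which consumes $O(s)$ random bits per candidate. By Proposition~\ref{prop:comp}(i), $O(s)$ candidates suffice to encounter a prime except with small constant probability, and Miller--Rabin testing costs $O(s)$ arithmetic operations in $\Zp$ per witness (one modular exponentiation by repeated squaring). Amplifying so that the cumulative false-positive probability across all $O(s)$ trials is a small constant requires $O(\log s)$ witnesses per trial, contributing $O(s^2)$ random bits (up to a $\log s$ factor) and $O(s^3)$ arithmetic operations overall. For Line~2, I enumerate the primes $q < 10\log(p-1)$ dividing $p-1$ by trial division (within budget), sample $a \in \Zp^*$ using $O(s)$ random bits per trial, and verify $\bigwedge_q a^{(p-1)/q} \neq 1$ at cost $O(s^2/\log s)$ operations per trial; Proposition~\ref{prop:prim}, via Proposition~\ref{prop:ZncondSn}, guarantees that $O(\log s)$ trials suffice in expectation, so this line contributes $O(s\log s)$ random bits and $O(s^2)$ arithmetic operations. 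Line~3 computes $\omega_n = h^{(p-1)/n}$ by repeated squaring in $O(s)$ operations. Line~4 evaluates $\bar{f}(\omega_n)$: the circuit $C$ has at most $s$ gates, a monomial input $x^e$ requires $O(s)$ operations (repeated squaring of $\omega_n$), and $+$/$\times$ gates cost $O(1)$ operations each (edge-weight reductions mod $p$ are absorbed into the arithmetic-operation count), so this line totals $O(s^2)$ arithmetic operations and no randomness.

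Summing over the four lines gives $O(s^2)$ random bits and $O(s^3)$ arithmetic operations, as claimed. The only delicate point is tuning the Miller--Rabin amplification in Line~1: using $O(1)$ witnesses per trial would allow too many composites to masquerade as primes across the $O(s)$ trials, while using $\Theta(s)$ witnesses would blow the random-bit budget up to $O(s^3)$. Striking the balance at $O(\log s)$ witnesses per trial and bounding cumulative rather than per-trial error is the key observation; the remainder of the accounting is routine.
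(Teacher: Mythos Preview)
Your proposal is correct and follows the same line-by-line accounting strategy as the paper.  The one substantive difference is your treatment of Miller--Rabin in Line~1: the paper simply runs each primality test with $O(s)$ witnesses (error $2^{-s}$ per test), which as written there actually yields $O(s^3)$ random bits and $O(s^4)$ arithmetic operations---overshooting the stated corollary.  Your observation that $O(\log s)$ witnesses per trial suffice to keep the \emph{cumulative} false-positive probability across $O(s)$ trials bounded is the right way to hit the $O(s^2)$ random-bit budget (and you honestly flag the residual $\log s$ factor).  One place where you are looser than the paper: in Line~4 you charge $O(s)$ operations \emph{per} monomial input $x^e$, giving $O(s^2)$ total, but since the bit-lengths of all the exponents together are bounded by the circuit size~$s$, the paper's tighter bound of $O(s)$ operations for evaluating $\bar f(\omega_n)$ is available.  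This does not affect the final $O(s^3)$ claim.
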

\begin{proof}
For primality testing, we use the Miller-Rabin~\cite{schoof} test, which can be implemented using $O(s^3)$ arithmetic operations in $\Zp$ and $O(s)$ random bits with error at most $2^{-s}$. Since this step is repeated $O(s)$ times, Line 1 of the algorithm can be implemented using $O(s^4)$ arithmetic operations and $O(s^3)$ random bits. Line $2$ can be implemented using $O(s)$ random bits and $O(s^2)$ arithmetic operations in $\mathbb{F}_p$. Finally computing $\bar{f}(\omega_n)$ can be implemented using $O(s)$ arithmetic operations in $\mathbb{F}_p$.
\end{proof}

The following example illustrates that 
if the element $h$ chosen in Line~$2$ of the algorithm in Figure~\ref{fig:algorithmBPP} is not a primitive element of  $\Zp^*$ then there might be
two-sided errors in the output.
 Indeed, consider the behaviour of the 
 algorithm on input $n=12$
 and the two 
 cyclotomic polynomials 
 \[\Phi_{12}(x)=x^4-x^2+1, \text{and } \Phi_6(x)=x^2-x+1.\]
The output should be `Zero' for $\Phi_{12}$ and 'Non-Zero'
for $\Phi_6$.  However
suppose the algorithm 
chooses $p=13$ in Line~$1$ and $h=4$ in Line~$2$.
(Note that $h$ is \emph{not} a primitive element of $\mathbb{F}_{p}^*$ since it has order $6$). In this case
$\omega_n$ would be set to $4$ in Line~3.  But
$\Phi_{12}(4)\not \equiv 0 \pmod{13}$ and $\Phi_6(4)\equiv  0 \pmod{13}$.

\section{A Randomised NC Algorithm for \skct}
\label{sec:RandNC}
In this section we give a randomized {\NC} algorithm for the {\skct} problem.
The algorithm is shown in Figure~\ref{fig:algorithmSkew}.  The input is an
algebraic circuit~$C$ 
with a unary upper bound on its
  syntactic degree (but with high-powered inputs), together with an integer~$n$ written in binary.  
  The desired output is whether or not $f(\zeta_n)=0$ for $f(x)$ the polynomial represented by $C$.
  
  Intuitively, Line~1 of the algorithm attempts to select a Galois 
  conjugate $\zeta_n^a$ of $\zeta_n$ uniformly at random.  
   Here, to avoid having to check whether $\gcd(a,n)=1$---which is not known to be 
   in {\NC}---we only have a Galois conjugate with high probability.
  Line~2 computes
  a numerical approximation $\alpha$ of $f(\zeta_n^a)$ to precision $2^{-4s-1}$.  Finally, Line~3
  outputs `Zero' if and only if $\alpha$ has absolute value at most $2^{-4s-1}$.
 
\paragraph{\bf Numerical approximation.}
  We now explain how
  the numerical approximation
  $\alpha$ in Line~2 can be computed in {\NC} with the desired precision.  
  
  Write $\varepsilon:=2^{-s^2-5s-1}$.  
  Taking $O(s^2)$ terms of the power series involved in Machin's formula~\cite{machin} for $\pi$ we obtain 
    $\widetilde{\pi} \in \mathbb{Q}$ such that 
  $|\pi-\widetilde{\pi}|<\varepsilon/3$.    Likewise, 
  for each $n$-th root of unity $\zeta_n^\ell$, by 
  taking $O(s^2)$ terms in the power series expansion
  for $e^{2\widetilde{\pi}i\ell/n}$, we obtain
  $\widetilde{\zeta_n^\ell} \in \mathbb{Q}(i)$ such that 
  $|\widetilde{\zeta_n^\ell}-e^{2\widetilde{\pi}i\ell/n}| < \varepsilon/3$.  
  We then have that, for all $0\leq \ell<n$,
  \begin{eqnarray*}
  |\widetilde{\zeta_n^\ell}-\zeta_n^\ell| &\leq & |\widetilde{\zeta_n^\ell}-e^{2\widetilde{\pi}i\ell/n}|+
                                                  |e^{2\widetilde{\pi}i\ell/n}-e^{2{\pi}i\ell/n}|\\
                                &\leq & \varepsilon/3 + |1 - e^{2(\widetilde{\pi}-\pi)i\ell/n}| \\
                                &\leq & \varepsilon/3 + 2|\widetilde{\pi}-\pi| \\
                                & \leq & \varepsilon \, .
  \end{eqnarray*}

\begin{figure}[t]
\begin{center}
\begin{tabular}{rp{.9\textwidth}}
 \hline
    \multicolumn{2}{c}{ \bf   Bounded Cyclotomic Identity Testing }  \\
    \hline
  & {\bf Input:} Algebraic circuit~$C$ with a unary upper bound on its
 syntactic degree, and integer~$n$ written in binary, of combined size~$s$.
\\\hline
& {\bf Output:} Whether $f(\zeta_n)=0$ for the polynomial~$f(x)$ computed by~$C$.\\
 \hline
  1:& Pick uniformly at random $a\in \{1,\ldots,n-1\}$ such that $a$ and $n$ have no common divisor less than
  $10\log n$.
 \\
   2:&  Compute $\alpha \in \mathbb{Q}(i)$ such that $|f(\zeta_n^a) - \alpha| < 2^{-4s-1}$.\\
   3: & Output `Zero' if  $|\alpha| <2^{-4s-1}$, otherwise output `Non-Zero'.
\\
   \hline
\end{tabular}
\caption{Algorithm for the Bounded-CIT problem}
\label{fig:algorithmSkew}
\end{center}
\end{figure}

  Now $f(x)=g(x^{e_1},\ldots,x^{e_k})$, where $e_1,\ldots,e_k$ are positive integers and
  $g$ is a multivariate polynomial represented  by a circuit 
  of size $s$ and syntactic degree $d \leq s$.  We define
  \[ \alpha:=g(\widetilde{\zeta_n^{ae_1}},\ldots,\widetilde{\zeta_n^{ae_k}}) \in \mathbb{Q}(i) \]
  In other words, $\alpha$ is obtained by evaluating the circuit $C$ on inputs 
  $\widetilde{\zeta_n^{ae_1}},\ldots,\widetilde{\zeta_n^{ae_k}}$.  Now each monomial in $g$ has total degree at most $d$ and the sum of the aboslute values of the coefficients of $g$ is at most $2^{sd}$.  Thus we have, by definition of $\varepsilon$,
    \begin{eqnarray*}
  |f(\zeta_n^a)-\alpha|&=&|g(\zeta_n^{ae_1},\ldots,\zeta_n^{ae_k})-g(\widetilde{\zeta_n^{ae_1}},\ldots,\widetilde{\zeta_n^{ae_k}})|\\
                       &\leq & 2^{sd} |1-(1+\varepsilon)^d| \\ &\leq &
                       2^{sd} \varepsilon 2^d \, \leq
                        2^{-4s-1} \, .
  \end{eqnarray*}
  
  By construction, the approximants $\widetilde{\zeta_n^{ae_1}},\ldots,\widetilde{\zeta_n^{ae_k}}$ are
  represented by arithmetic circuits of size and degree polynomial in $s$ that can moreover 
  be computed in space
  logarithmic in~$s$.  Composing these circuits with $C$ gives an arithmetic circuit 
  for $\alpha$ that has size and degree polynomial in $s$.
  We now use the classical result of Valiant \emph{et al.}~\cite{VSBR} 
  that given an arithmetic circuit of degree $d$ 
and size $\sigma$ 
one can construct (in logarithmic space~\cite{ajmv}) an equivalent arithmetic circuit of depth 
$O(\log(d)\log(\sigma))$ and size 
polynomial in $\log d$ and $\sigma$.   
Every bit of the numbers produced at each gate of the resulting compressed-depth circuit can be 
computed by Boolean {\NC} circuits 
of size at most $O(\sigma\log{\sigma})$~\cite{reif-tate}.  Applying
this transformation to the algebraic circuit for $\alpha$
results in a Boolean circuit for (the bits of) $\alpha$ of size polynomial in $s$ and depth
polynomial in $\log(s)$ that is moreover computable in space $\log s$.  This shows that 
$\alpha$ is computable in {\NC}.

\paragraph{\bf Correctness.}\
The probabilistic correctness of the algorithm relies on the
following well-known result:
\begin{restatable}{proposition}{chenkao}[Chen and Kao~\cite{chen-kao} and Bl\"{o}mer~\cite{blomer}]\label{prop:blomer}
	Let $\alpha$ be a non-zero algebraic integer where  the absolute value of all its conjugates
	is at most $2^B$. For all $b \in \N$, a random
	conjugate $\alpha'$ of $\alpha$ satisfies $|\alpha'| \leq 2^{-b}$, with probability at most $B/(b + B)$.
\end{restatable}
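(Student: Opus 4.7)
The plan is to exploit the fact that the norm of a non-zero algebraic integer is a non-zero rational integer, hence has absolute value at least $1$. Write $D$ for the degree of $\alpha$, and let $\alpha_1,\ldots,\alpha_D$ denote the full list of Galois conjugates of $\alpha$. Picking a random conjugate corresponds to picking an index $i \in \{1,\ldots,D\}$ uniformly at random, so the probability to be bounded is $k/D$, where
\[ k := \bigl|\{\,i : |\alpha_i| \leq 2^{-b}\,\}\bigr|. \]

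The key step is to estimate $|N(\alpha)|=\prod_{i=1}^D |\alpha_i|$ by splitting the product according to which conjugates are ``small''. The $k$ small conjugates each contribute at most $2^{-b}$, while the remaining $D-k$ conjugates contribute at most $2^B$ each by hypothesis. Since $\alpha \neq 0$ is an algebraic integer, $N(\alpha) \in \Z\setminus\{0\}$, so $|N(\alpha)|\geq 1$. Combining these observations,
\[ 1 \;\leq\; |N(\alpha)| \;\leq\; 2^{-bk}\cdot 2^{B(D-k)}. \]

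Taking logarithms base $2$ yields $0 \leq -bk + B(D-k)$, i.e.\ $k(b+B)\leq BD$, and therefore
\[ \mathrm{Pr}\bigl(|\alpha'|\leq 2^{-b}\bigr) \;=\; \frac{k}{D} \;\leq\; \frac{B}{b+B}, \]
which is the claimed bound. There is no real obstacle here; the only subtle point worth being explicit about is invoking that a non-zero algebraic integer has norm of absolute value $\geq 1$, which is what prevents too many conjugates from being simultaneously small.
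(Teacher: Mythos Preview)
Your proof is correct and follows essentially the same approach as the paper: both bound the product of the conjugates from below by $1$ (you via the norm being a non-zero integer, the paper via the constant term of the minimal polynomial, which amounts to the same thing) and from above by splitting into small and large conjugates, then take logarithms to conclude. The arguments are virtually identical.
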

 
 \begin{proof}
For  the algebraic integer~$\alpha$, let   $\alpha_0 = \alpha, \alpha_1, \dots, \alpha_{d-1}$ be the conjugates.
	Let $k$ be the number of conjugates that are at most $2^{-b}$ in absolute value. Recall that $|\prod_{i=0}^d \alpha_i|$ is 
	the absolute value of the constant term of the minimal polynomial of $\alpha$. Since the minimal polynomial by definition is integral, the product $|\prod_{i=0}^d \alpha_i|$ is at least $1$.
	Together with the upper bound $2^B$ on $|\alpha_i|$ we get
	\[
	1 \leq |\prod_{i=0}^d \alpha_i| \leq (2^{B})^{d-k} (2^{-b})^k
	\]
	
	This implies that 
	\begin{eqnarray*}
		(2^{B})^{d-k} (2^{-b})^{k} &\geq& 1 \\
		2^{dB} 2^{-k(B+b)} &\geq& 1 \\
		dB - k(B+b) &\geq& 0 \\
		\frac{k}{d} &\leq& \frac{B}{B+b}
	\end{eqnarray*}
\end{proof}
 
Observe that the constants appearing in $f$ have magnitude at most $2^s$ and that $f$ has degree less than $2^s$.
It follows that $|f(\zeta_n^\ell)|\leq 2^{2s}$ for all $\ell$.
Using Proposition~\ref{prop:blomer}  with $B = 2s$ and $b = 4s$, whenever $f(\zeta_n)$ is non-zero,  a  random conjugate
$f(\zeta_n^a)$ of~$f(\zeta_n)$  has 
absolute value larger than $2^{-4s}$ with probability at least~$\frac{2}{3}$. 

In Line~1, the algorithm takes a polynomial number of samples from $\{1,\ldots,n-1\}$ uniformly at random (in parallel) and returns any~$a$ 
such that $a$ and $n$ have no common divisors less than $10 \log{n}$.  By Proposition~\ref{prop:ZncondSn},  
we have that $a$ is coprime with $n$, and hence
$\zeta_n^a$ is a conjugate of $\zeta_n$, with probability at least~$\frac{9}{10}$.

To estimate the error probability we consider two cases.  First, suppose
that $f(\zeta_n)\neq 0$.  Then with probability at least $\frac{9}{10} \cdot \frac{2}{3} = \frac{3}{5}$ we have
that $|f(\zeta_n^a)|> 2^{-4s}$ and hence $|\alpha| > 2^{-4s-1}$.
Second, suppose that $f(\zeta_n)=0$.  Then with probability at least $\frac{2}{3}$ we have 
$f(\zeta_n^a)=0$ and hence $|\alpha| < 2^{-4s-1}$.
Thus the error probability is at most $\frac{2}{5}$.  Finally we obtain:

\theoskew*

\subsection{Compressed Words and Powerful Skew Circuits}
An algebraic circuit computing a univariate polynomial is said to be a 
\emph{powerful skew circuit} if at least one input of every multiplication gate is a leaf.
Here the word \emph{powerful} reflects our convention that leaves can be labelled with
monomials $x^m$, where $m$ is given in binary.
The main motivation for studying this identity testing problem
is that there is an {\NC} reduction of the equivalence testing problem for compressed strings to identity testing for powerful skew circuits~\cite{konig-lohrey}.
Briefly, a compressed word is one that is given by an acyclic context-free grammar in which each non-terminal occurs on the left-hand side of 
exactly one production.  Such a grammar produces a single word, whose length can be exponential in the number of non-terminals and productions.
See~\cite{konig-lohrey} for more details.

In this section 
we give an alternative randomised $\NC$ algorithm for PIT on powerful skew circuits,
employing the same random conjugate technique used to solve the {\skct} problem. 
Since the syntactic degree of a powerful skew circuit is at most the number of gates we can use our Algorithm in~Figure~\ref{fig:algorithmSkew} to decide
PIT over the class of powerful skew circuits: we simply 
pick a root of unity $\zeta_n$ with~$n$ higher than the degree of the given polynomial~$f\in\mathbb{Z}[x]$, and approximate a random conjugate of~$f(\zeta_n)$. 

Since the algorithm is insensitive to the choice of~$n$, 
as long as it is larger than the degree of~$f$ (that is at most~$2^s$ where~$s$ is the size of circuit), we  choose~$n=2^{4s}$ ensuring that 
$\zeta^a_n$ is a conjugate of $\zeta_n$ for all odd numbers~$a$, $1\leq a<n$. 
This prevents one type of error in our randomised algorithm for the {\skct} 
problem (the error caused by picking a non-conjugate in Line~$1$ of 
Figure~\ref{fig:algorithmSkew}); indeed, 
whenever $f(\zeta_n)=0$  our algorithm   returns `Zero' with probability~$1$.
Then we conclude the following theorem noting that the  
approximation is efficiently computable in randomized sequential 
time  using Brent's algorithm~\cite{brent}. 


\begin{restatable}{theorem}{slpserial} 
\label{theo:slpserial}
Testing equality of two compressed words, of combined size~$s$,   
 is solvable  in $\widetilde{O}(s^2)$-time by a randomized 
 sequential algorithm. Furthermore,  it can be 
 implemented by $\widetilde{O}(s^3)$-sized {\NC} circuits 
 using $O(s)$ random bits.
\end{restatable}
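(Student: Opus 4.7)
The plan is to instantiate the {\skct} algorithm from Figure~\ref{fig:algorithmSkew} on the powerful skew circuit produced by the K\"{o}nig--Lohrey reduction~\cite{konig-lohrey}, choosing $n = 2^{4s}$. The key observation, already noted in the paragraph preceding the theorem, is that for this particular $n$ the group $\mathbb{Z}_n^*$ is exactly the set of odd integers in $\{1,\ldots,n-1\}$, so Line~1 reduces to sampling a uniform odd number (using $O(s)$ random bits) and always produces a Galois conjugate of $\zeta_n$. Consequently the algorithm has one-sided error: if $f(\zeta_n) = 0$ it returns ``Zero'' with probability one, and if $f(\zeta_n) \neq 0$ then Proposition~\ref{prop:blomer} applied with $B, b = O(s)$---exploiting that the coefficients of $f$ have magnitude $2^{O(s)}$ and that the degree of $f$ is at most $2^s$---gives $|f(\zeta_n^a)| > 2^{-O(s)}$ with probability at least $2/3$.

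For the sequential bound, I would approximate $\zeta_n^a = e^{2\pi i a/n}$ to precision $2^{-O(s)}$ in time $\widetilde{O}(s)$ via Brent's algorithm~\cite{brent}, and then evaluate the powerful skew circuit (which has $O(s)$ gates) at this approximation. Each gate performs one arithmetic operation on complex numbers of bit-length $\widetilde{O}(s)$, costing $\widetilde{O}(s)$ time with fast multiplication, for a total of $\widetilde{O}(s^2)$. The only thing to verify is that the accumulated rounding error stays below $2^{-O(s)}$ at the output, which reduces to the error-propagation analysis of Section~\ref{sec:RandNC} specialised to the sharper coefficient and degree bounds above.

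For the {\NC} statement, I would repeat the construction of Section~\ref{sec:RandNC}: build arithmetic circuits for the Taylor approximants of $\pi$ and of the relevant powers of the root of unity, compose them with~$C$, and then apply the Valiant \emph{et al.} depth-reduction~\cite{VSBR} together with bit-level Reif--Tate circuits~\cite{reif-tate}. Because the required precision is $2^{-O(s)}$ rather than $2^{-O(s^2)}$, the intermediate numbers carry only $\widetilde{O}(s)$ bits; each of the $O(s)$ arithmetic gates then contributes an NC sub-circuit of size $\widetilde{O}(s^2)$, yielding total size $\widetilde{O}(s^3)$. Only the $O(s)$ bits used to sample the odd value of $a$ are random. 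The main obstacle I anticipate is keeping the bit-length bookkeeping tight through the composition of approximant circuits and the depth-reduction, so that neither the sequential $s^2$ nor the parallel $s^3$ hides an extra factor of $s$ from an overly pessimistic precision estimate.
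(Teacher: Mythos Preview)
Your proposal follows essentially the same route as the paper: reduce to a powerful skew circuit, take $n=2^{4s}$ so that every odd $a$ gives a genuine conjugate (making the error one-sided and costing $O(s)$ random bits), and then approximate $f(\zeta_n^a)$ numerically. One point, however, needs correction in your sequential analysis. You write ``approximate $\zeta_n^a=e^{2\pi i a/n}$ \ldots and then evaluate the powerful skew circuit at this approximation'', treating each gate as one arithmetic operation on $\widetilde O(s)$-bit numbers. But the \emph{input} gates carry labels $x^m$ with $m$ as large as $2^s$; producing $(\widetilde{\zeta_n^a})^m$ from a single approximation of $\zeta_n^a$ by repeated squaring with truncation costs $\widetilde O(s^2)$ per input, and with $O(s)$ inputs this would push the total to $\widetilde O(s^3)$.

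The paper avoids this by computing, for each input label $x^m$, the value $\zeta_n^{am}=e^{2\pi i(am \bmod n)/n}$ \emph{directly} via Brent's algorithm: first reduce $am$ modulo $n$ in $O(s)$-bit arithmetic, then approximate the exponential to $O(s)$ bits in time $\widetilde O(s)$. Summed over the $O(s)$ inputs this is $\widetilde O(s^2)$, after which the $O(s)$ internal additions and multiplications on $O(s)$-bit numbers add another $\widetilde O(s^2)$. With this fix your argument matches the paper's. For the {\NC} bound the paper is even terser than you are---it simply invokes Theorem~\ref{theo:skew} and its proof---so your more explicit accounting is in the same spirit.
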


\begin{proof}

For a randomized sequential algorithm in time $\widetilde{O}(s^2)$, having chosen the random conjugate~$\zeta_n^a$, 
for each  $x^m$, inputted to a multiplication gate, we need to
   compute 
  $\zeta^{am}_n$ truncated up to an $O(s)$-bit precision using Taylor expansion. By
  Brent's algorithm~\cite{brent}, for each $k$, $1 \leq k \leq n$, we can compute  $e^{\frac{2\pi i k}{n}}$ within an error of $2^{-O(s)}$ in $O(s\log{s})$ time. Since there are at most $O(s)$ such different occurrences of $\zeta_n^{am}$ in the powerful skew circuit, all these $O(s)$-bit approximations can be computed in $O(s^2\log{s})$-time.

We are now left with the task of evaluating a powerful-skew arithmetic circuit that has $O(s)$ binary additions and $O(s)$ binary multiplications on $O(s)$-bit numbers. Addition and multiplication of two $O(s)$-bit integers can be implemented in $O(s)$ and $O(s\log{s})$ time~\cite{harvey} respectively. Hence, for the whole circuit this can be implemented with an additional time complexity of $O(s^2) + O(s^2\log{s})$. Hence the overall time complexity is $\widetilde{O}(s^2)$. The number of random bits used is $O(\log{n}) = O(s)$ (to select a conjugate of $\zeta_n$). Notice that in a RAM model where each operation is unit cost, this results in a $O(s)$-time algorithm, and in the log-cost model a $O(s\log{s})$-time algorithm.

\medskip

The second item is an immediate consequence of Theorem~\ref{theo:skew} and its proof.
\end{proof}

\section{An NC Algorithm for \sct}\label{sec:nc}
Cheng \emph{et al.}~\cite{cheng-tv} showed how to solve the {\sct} in polynomial time.
Their method involves a tensor decomposition of the 
space of all polynomials that vanish on a given root of unity $\zeta_n$, 
based on a partial factorisation
of $n$.  They then exploit
sparsity to efficiently determine membership of this space (which has
dimension $n$, exponential in the length of the problem instance).  
Below we reformulate this idea  
to avoid working with vector spaces of exponential dimension.  
We work instead with a space  
of vanishing sums (see~\eqref{eq-vanish-sum}) 
whose dimension equals the number of monomials of the input polynomial.
Our reformulation relies on a simple proposition in linear algebra (Proposition~\ref{prop:span}),
which not only simplifies the approach of
Cheng {et al.}, but allows to place the problem {\sct} in~{\NC}.

Let $\zeta_n$ denote a primitive $n$-th root of unity for a positive
integer $n$.  Given a vector~$\boldsymbol{k}=(k_1,\ldots,k_s)$ of 
nonnegative integers where  $ k_1 < \cdots < k_s$,
we aim to compute the space of \emph{vanishing sums} 
\begin{equation} \label{eq-vanish-sum}
V_{n}^{\boldsymbol{k}}:=\left\{
a\in\mathbb{Q}^s :\sum_{i=1}^s a_i \zeta^{k_i}_n = 0\right\}\end{equation} 
in polylogarithmic parallel time
in the total bit length of $n$ and~$k_1,\ldots,k_s$.

\subsection{Composing spaces of vanishing sums}
 In the approach of~\cite{cheng-tv}[Section 2.1] the following (which is an easy
consequence of the Chinese Remainder Theorem) plays a
central role:
\begin{proposition}
Suppose that $n=n_1n_2$ for $n,n_1,n_2\in \mathbb{N}$, with $n_1$
and $n_2$ coprime.  Then the map $\zeta_n \mapsto \zeta_{n_1} \otimes
\zeta_{n_2}$ defines a $\mathbb{Q}$-algebra isomorphism between $\mathbb{Q}(\zeta_n)$
and $\mathbb{Q}(\zeta_{n_1}) \otimes \mathbb{Q}(\zeta_{n_2})$.
\label{prop:iso}
\end{proposition}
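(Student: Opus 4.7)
The plan is to establish the isomorphism in two steps: exhibit a natural surjective $\mathbb{Q}$-algebra homomorphism from the tensor product onto $\mathbb{Q}(\zeta_n)$, then upgrade surjectivity to bijectivity via a dimension count.

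First I would verify that $\mathbb{Q}(\zeta_{n_1},\zeta_{n_2})=\mathbb{Q}(\zeta_n)$. The inclusion $\supseteq$ follows from $\zeta_{n_1}=\zeta_n^{n_2}$ and $\zeta_{n_2}=\zeta_n^{n_1}$, each being a primitive root of unity of the appropriate order by the coprimality of $n_1$ and $n_2$. For the reverse inclusion, pick integers $a,b$ with $an_1+bn_2=1$ via Bezout and compute
\[
\zeta_{n_1}^{b}\zeta_{n_2}^{a}=\zeta_n^{bn_2+an_1}=\zeta_n,
\]
placing $\zeta_n$ in $\mathbb{Q}(\zeta_{n_1},\zeta_{n_2})$.

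Next, consider the multiplication map $\mu:\mathbb{Q}(\zeta_{n_1})\otimes_{\mathbb{Q}}\mathbb{Q}(\zeta_{n_2})\to \mathbb{Q}(\zeta_n)$ sending $x\otimes y\mapsto xy$. This is a well-defined $\mathbb{Q}$-algebra homomorphism by the universal property of the tensor product (the $\mathbb{Q}$-bilinear pairing $(x,y)\mapsto xy$ factors through the tensor product), and its image contains both $\zeta_{n_1}$ and $\zeta_{n_2}$, so the previous step makes $\mu$ surjective. The domain has $\mathbb{Q}$-dimension $\varphi(n_1)\varphi(n_2)$ while the codomain has $\mathbb{Q}$-dimension $\varphi(n)$, and multiplicativity of Euler's totient on coprime arguments gives $\varphi(n)=\varphi(n_1)\varphi(n_2)$. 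Hence $\mu$ is a $\mathbb{Q}$-algebra isomorphism.

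To recover the stated formula for $\mu^{-1}$, compute $\mu(\zeta_{n_1}\otimes\zeta_{n_2})=\zeta_{n_1}\zeta_{n_2}=\zeta_n^{n_1+n_2}$; any prime dividing $n=n_1 n_2$ divides exactly one of $n_1, n_2$ and hence fails to divide $n_1+n_2$, so $\zeta_n^{n_1+n_2}$ is itself a primitive $n$-th root of unity. Taking this as the designated primitive root realises $\mu^{-1}$ as the map $\zeta_n\mapsto \zeta_{n_1}\otimes\zeta_{n_2}$. I do not expect a serious obstacle in executing this plan; the only genuinely delicate point is keeping track of which specific primitive roots are used so that the map is literally as stated, which is handled by this last reparameterisation.
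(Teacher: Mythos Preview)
Your proposal is correct and fills in precisely the standard argument that the paper omits: the paper does not give a proof of this proposition at all, merely remarking that it ``is an easy consequence of the Chinese Remainder Theorem'' and citing \cite{cheng-tv}. Your use of B\'ezout to identify $\mathbb{Q}(\zeta_{n_1},\zeta_{n_2})=\mathbb{Q}(\zeta_n)$, together with the dimension count via multiplicativity of $\varphi$, is exactly the intended content of that remark.

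One small point worth tightening in your final paragraph: since the paper fixes $\zeta_m=e^{2\pi i/m}$ throughout, you cannot simply ``take $\zeta_n^{n_1+n_2}$ as the designated primitive root''. The clean way to finish is to note that once $\mu$ is known to be an isomorphism, $\zeta_{n_1}\otimes\zeta_{n_2}$ is a root of $\Phi_n$ in the tensor product (being the $\mu$-preimage of a primitive $n$-th root), so the assignment $\zeta_n\mapsto \zeta_{n_1}\otimes\zeta_{n_2}$ extends to a $\mathbb{Q}$-algebra homomorphism $\mathbb{Q}(\zeta_n)\to\mathbb{Q}(\zeta_{n_1})\otimes\mathbb{Q}(\zeta_{n_2})$; this is injective because the domain is a field, and then bijective by your dimension count. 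This avoids any reparameterisation and matches the literal statement.
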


Given vectors $a,b\in\mathbb{Q}^s$, define the \emph{Hadamard product}
$a\odot b \in \mathbb{Q}^s$ by $a\odot b := (a_1b_1,\ldots,a_sb_s)$.
Furthermore, given $U,V$ vector sub-spaces of $\mathbb{Q}^s$, define
\[ U \odot V = \mathrm{span}\{ u \odot v  : u \in U, v\in V\} \, . \]
Given bases of $U$ and $V$, we can compute a basis of $U \odot V$ in
$\NC$ by constructing the set of products $u\odot v$ as $u$ ranges over 
the basis of $U$ and $v$ ranges over the basis of $V$, and then selecting
a maximally linearly independent subset of the resulting collection of vectors
(e.g., by the algorithm of~\cite{ChistovG84}).  
The operator $\odot$ is moreover associative, so given a list of 
subspaces $U_1,\ldots,U_\ell \subseteq \mathbb{Q}^s$, we can compute 
the iterated product $U_1\odot \cdots \odot U_\ell$ in $\NC$ using the parallel 
prefix technique~\cite{LadnerF80}.

Denote by
$U^\perp$ the orthogonal complement of~$U\subseteq \mathbb{Q}^s$.
\begin{proposition}
Let $U,V$ be finite dimensional vector spaces over $\mathbb{Q}$ with $u_1,\ldots,u_s \in U$
and $v_1,\ldots,v_s \in V$ for some $s\in\mathbb{N}$.  
Define the following three vector subspaces of $\mathbb{Q}^s$:
\begin{eqnarray*} 
A&:=&\textstyle \left\{ a \in \mathbb{Q}^s : \sum_{i=1}^s a_i u_i = 0\right\} \\
B&:=&\textstyle \left\{ b \in \mathbb{Q}^s : \sum_{i=1}^s b_i v_i = 0\right\} \\
C&:=&\textstyle \left\{ c \in \mathbb{Q}^s : \sum_{i=1}^s c_i (u_i\otimes v_i) = 0\right\} \, .
\end{eqnarray*}
Then $C^\perp = A^\perp \odot B^\perp$. 
\label{prop:span}
\end{proposition}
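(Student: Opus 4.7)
The plan is to reduce to the finite-dimensional setting by working inside $U' := \mathrm{span}\{u_1,\ldots,u_s\} \subseteq U$ and $V' := \mathrm{span}\{v_1,\ldots,v_s\} \subseteq V$, and then to invoke the standard isomorphism $(U'\otimes V')^* \cong (U')^* \otimes (V')^*$. The starting point is a dual characterisation: a vector $\ell\in\mathbb{Q}^s$ lies in $A^\perp$ if and only if there exists $\alpha \in (U')^*$ with $\ell_i = \alpha(u_i)$ for $i=1,\ldots,s$. The ``if'' direction is immediate, since $\sum a_i \alpha(u_i) = \alpha(\sum a_i u_i) = 0$ for all $a \in A$. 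For the converse, I would set $\alpha(\sum c_i u_i) := \sum c_i \ell_i$; the well-definedness of $\alpha$ on $U'$ is precisely the condition $\ell \in A^\perp$. An analogous statement holds for $B^\perp$ and $(V')^*$.

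For the inclusion $A^\perp \odot B^\perp \subseteq C^\perp$, I would take $\alpha \in (U')^*$ and $\gamma \in (V')^*$, form the linear functional $\alpha \otimes \gamma$ on $U' \otimes V'$, and note that for any $c \in C$,
\[ (\alpha \otimes \gamma)\Bigl(\sum_{i=1}^s c_i\, u_i \otimes v_i\Bigr) \;=\; \sum_{i=1}^s c_i\, \alpha(u_i)\, \gamma(v_i) \;=\; 0. \]
Hence $(\alpha(u_i))_i \odot (\gamma(v_i))_i$ annihilates $C$ and lies in $C^\perp$; taking spans gives the inclusion.

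For the reverse inclusion $C^\perp \subseteq A^\perp \odot B^\perp$, given $\ell \in C^\perp$ I would first define a linear functional on the subspace $\mathrm{span}\{u_i \otimes v_i : 1\leq i \leq s\}$ of $U' \otimes V'$ by $\sum c_i\, u_i\otimes v_i \mapsto \sum c_i \ell_i$; well-definedness is again exactly the hypothesis $\ell\in C^\perp$. Since $U'\otimes V'$ is finite dimensional, this functional extends to some $\mu \in (U' \otimes V')^*$. Applying the canonical isomorphism $(U'\otimes V')^* \cong (U')^* \otimes (V')^*$ expresses $\mu$ as $\sum_k \alpha_k \otimes \gamma_k$ with $\alpha_k \in (U')^*$ and $\gamma_k \in (V')^*$. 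Evaluating at $u_i \otimes v_i$ then yields $\ell_i = \sum_k \alpha_k(u_i)\gamma_k(v_i)$, so
\[ \ell \;=\; \sum_k \bigl(\alpha_k(u_j)\bigr)_{j=1}^s \odot \bigl(\gamma_k(v_j)\bigr)_{j=1}^s, \]
which by the initial dual characterisation identifies $\ell$ as an element of $A^\perp \odot B^\perp$.

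Every step is routine linear algebra; the only point requiring any attention is the decomposition of an arbitrary functional on $U' \otimes V'$ as a sum of simple tensors, which is immediate from the finite-dimensionality of $U'$ and $V'$. It is worth noting that the finite-dimensionality of $U$ and $V$ themselves is unnecessary: the argument runs entirely inside the finite-dimensional subspaces $U'$ and $V'$.
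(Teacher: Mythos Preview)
Your proof is correct and is essentially a coordinate-free rephrasing of the paper's argument. The paper fixes bases so that $U=\mathbb{Q}^m$ and $V=\mathbb{Q}^n$, identifies $A^\perp$ (respectively $B^\perp$, $C^\perp$) with the row space of the matrix whose columns are $u_1,\ldots,u_s$ (respectively $v_i$, $u_i\otimes v_i$), and then observes that the rows of the Kronecker-column matrix are precisely the Hadamard products of a row of the $u$-matrix with a row of the $v$-matrix. Your dual characterisation $A^\perp=\{(\alpha(u_i))_i:\alpha\in(U')^*\}$ is the abstract form of ``row space equals orthogonal complement of the kernel'', and your invocation of $(U'\otimes V')^*\cong (U')^*\otimes (V')^*$ is the abstract form of the paper's Kronecker-product observation. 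The one thing your presentation buys is the remark that only the spans $U',V'$ need be finite-dimensional; the paper's hypothesis that $U$ and $V$ themselves are finite-dimensional is indeed superfluous.
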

\begin{proof}
For a non-negative integer $k$ and list of vectors
$w_1,\ldots,w_s \in \mathbb{Q}^{k}$, write $R(w_1,\ldots,w_s)$ for the
row space of the matrix with columns $w_1,\ldots,w_s$.  Recall that
$R(w_1,\ldots,w_s)$ is the orthogonal complement of $\{ a \in                                                                                                                        \mathbb{Q}^s : \sum_{i=1}^s a_iw_i=0\}$.

Without loss of generality, suppose that $U=\mathbb{Q}^m$ and
$V=\mathbb{Q}^n$.  Then we can identify $U\otimes V$ with
$\mathbb{Q}^{mn}$ by taking  $u\otimes v$ to be the Kronecker product of $u\in U$ and $v\in V$. Now we have
\begin{equation}
\begin{array}{rcl}
A^\perp &\,=\,& R(u_1,\ldots,u_s) \\
B^\perp &=& R(v_1,\ldots,v_s) \\
C^\perp &=& R(u_1\otimes v_1,\ldots,u_s\otimes v_s) \, . 
\end{array}
\label{eq:perp1}
\end{equation}

But it clearly also holds that
\begin{align}
R&(u_1\otimes v_1,\ldots,u_s\otimes v_s) = 
R(u_1,\ldots,u_s) \odot R(v_1,\ldots,v_s) \, .  
\label{eq:perp2}
\end{align}

The result follows from Equations (\ref{eq:perp1}) and (\ref{eq:perp2}).
\end{proof}

The following result follows from 
Propositions~\ref{prop:iso}
and~\ref{prop:span}:
\begin{corollary}
Let $n_1$ and $n_2$ be coprime positive integers, then
\[ (V_{n_1n_2}^{\boldsymbol{k}})^\perp =
    (V_{n_1}^{\boldsymbol{k}})^\perp \odot 
    (V_{n_2}^{\boldsymbol{k}})^\perp \, .\]
\label{corl:split}
\end{corollary}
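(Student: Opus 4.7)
The plan is to chain Proposition~\ref{prop:iso} with Proposition~\ref{prop:span}, using the former to convert a statement about $\zeta_{n_1n_2}^{k_i}$ into one about simple tensors $\zeta_{n_1}^{k_i}\otimes\zeta_{n_2}^{k_i}$, and the latter to decompose the orthogonal complement of the resulting space of vanishing sums as a Hadamard product.

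First I would set $U:=\mathbb{Q}(\zeta_{n_1})$, $V:=\mathbb{Q}(\zeta_{n_2})$, and $u_i:=\zeta_{n_1}^{k_i}$, $v_i:=\zeta_{n_2}^{k_i}$ for $i=1,\ldots,s$, so that in the notation of Proposition~\ref{prop:span} we have $A=V_{n_1}^{\boldsymbol{k}}$ and $B=V_{n_2}^{\boldsymbol{k}}$ directly from the definition~\eqref{eq-vanish-sum}. The content of the proof then reduces to identifying the space $C=\{c\in\mathbb{Q}^s:\sum_i c_i(u_i\otimes v_i)=0\}$ with $V_{n_1n_2}^{\boldsymbol{k}}$.

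For this identification I would invoke Proposition~\ref{prop:iso}: since $n_1$ and $n_2$ are coprime, the $\mathbb{Q}$-algebra isomorphism $\mathbb{Q}(\zeta_{n_1n_2})\cong\mathbb{Q}(\zeta_{n_1})\otimes\mathbb{Q}(\zeta_{n_2})$ sending $\zeta_{n_1n_2}\mapsto\zeta_{n_1}\otimes\zeta_{n_2}$ is multiplicative, so it sends $\zeta_{n_1n_2}^{k_i}\mapsto\zeta_{n_1}^{k_i}\otimes\zeta_{n_2}^{k_i}=u_i\otimes v_i$. Consequently, a $\mathbb{Q}$-linear combination $\sum_i c_i\zeta_{n_1n_2}^{k_i}$ vanishes in $\mathbb{Q}(\zeta_{n_1n_2})$ if and only if $\sum_i c_i(u_i\otimes v_i)$ vanishes in the tensor product, which gives $C=V_{n_1n_2}^{\boldsymbol{k}}$.

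Combining these two identifications with the conclusion $C^\perp=A^\perp\odot B^\perp$ of Proposition~\ref{prop:span} then yields the claim. I do not anticipate a real obstacle here; the only point to verify carefully is that the isomorphism of Proposition~\ref{prop:iso} is indeed multiplicative (so that $k_i$-th powers match up), which is part of its being a $\mathbb{Q}$-algebra isomorphism, and that the hypotheses of Proposition~\ref{prop:span} are satisfied with the ambient spaces taken to be $\mathbb{Q}(\zeta_{n_1})$ and $\mathbb{Q}(\zeta_{n_2})$ viewed as finite-dimensional $\mathbb{Q}$-vector spaces.
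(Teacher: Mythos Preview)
Your proposal is correct and is exactly the argument the paper has in mind: it simply states that the corollary follows from Propositions~\ref{prop:iso} and~\ref{prop:span}, and you have spelled out precisely how those two results combine. The only detail you flag---that the isomorphism of Proposition~\ref{prop:iso} is multiplicative, so that $\zeta_{n_1n_2}^{k_i}\mapsto \zeta_{n_1}^{k_i}\otimes\zeta_{n_2}^{k_i}$---is indeed the one thing worth checking, and it holds because $(\zeta_{n_1}\otimes\zeta_{n_2})^{k_i}=\zeta_{n_1}^{k_i}\otimes\zeta_{n_2}^{k_i}$ in a tensor product of algebras.
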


\subsection{Base cases}
We will use Corollary~\ref{corl:split} in tandem with the following known 
characterisations of the vanishing spaces for prime powers and composite numbers. 


\begin{proposition}
Let $p$ be a prime, $e$ a positive integer, and let
$0 \leq k_1<\ldots<k_s<p^{e}$ be non-negative integers.  Given
$a \in \mathbb{Q}^s$, 
we have 
$\sum_{i=1}^s a_i \zeta_{p^e}^{k_i} = 0$ if and only if
(i)~$a_i=a_j$ for all $i,j$ such that ${k_i \equiv k_j}\pmod {p^{e-1}}$
and (ii)~$a_i=0$ for all $i$ such that $\#\{ k_j : {k_i \equiv k_j}\pmod {p^{e-1}}\} < p$.
\label{prop:char}
\end{proposition}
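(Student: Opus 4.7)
The plan is to determine the $\mathbb{Q}$-linear dependencies among the powers $1, \zeta_{p^e}, \zeta_{p^e}^2, \ldots, \zeta_{p^e}^{p^e-1}$ completely, and then read off conditions (i) and (ii) as a translation of the statement ``the sparse vector of coefficients lies in the space of relations''.

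First I would recall the factorisation $\Phi_{p^e}(x) = \Phi_p(x^{p^{e-1}}) = 1 + x^{p^{e-1}} + x^{2p^{e-1}} + \cdots + x^{(p-1)p^{e-1}}$. Since $\Phi_{p^e}$ is the minimal polynomial of $\zeta_{p^e}$ and has degree $\varphi(p^e) = p^{e-1}(p-1)$, the $\mathbb{Q}$-vector space of relations
\[ W := \left\{ v \in \mathbb{Q}^{p^e} : \sum_{j=0}^{p^e-1} v_j \zeta_{p^e}^j = 0 \right\} \]
has dimension $p^e - \varphi(p^e) = p^{e-1}$.

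Next I would exhibit an explicit basis of $W$. Multiplying $\Phi_{p^e}(\zeta_{p^e}) = 0$ by $\zeta_{p^e}^r$ for each $r \in \{0, 1, \ldots, p^{e-1}-1\}$ gives
\[ \sum_{t=0}^{p-1} \zeta_{p^e}^{\, r + t p^{e-1}} = 0. \]
Let $\rho_r \in \mathbb{Q}^{p^e}$ be the indicator vector of the block $B_r := \{r, r+p^{e-1}, \ldots, r+(p-1)p^{e-1}\}$. The blocks $B_0, \ldots, B_{p^{e-1}-1}$ partition $\{0, 1, \ldots, p^e - 1\}$, so the $\rho_r$ have disjoint supports and are linearly independent. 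By the dimension count they form a basis of $W$. Consequently, a vector $v \in \mathbb{Q}^{p^e}$ lies in $W$ if and only if $v$ is constant on each block $B_r$.

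Finally I would translate this back into the statement of the proposition. Form the vector $v \in \mathbb{Q}^{p^e}$ with $v_{k_i} = a_i$ and $v_j = 0$ otherwise; then $\sum_{i} a_i \zeta_{p^e}^{k_i} = 0$ iff $v \in W$ iff $v$ is constant on every block $B_r$. Partitioning the indices via $S_r := \{i : k_i \equiv r \pmod{p^{e-1}}\}$, one has two cases for each $r$: if $|S_r| = p$ (the block $B_r$ is fully covered by the $k_i$) then the constancy condition says all $a_i$ with $i \in S_r$ agree, which is exactly~(i); if $|S_r| < p$ then $v$ has a zero entry somewhere in $B_r$, so constancy forces every $a_i$ with $i \in S_r$ to vanish, which is exactly~(ii). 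Since (i) and (ii) cover disjoint cases and jointly characterise membership in $W$, this proves the equivalence.

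The only nontrivial step is the identification of the basis $\{\rho_r\}$ of $W$; the disjoint-support argument handles linear independence painlessly, so the main obstacle is really just recalling the explicit form of $\Phi_{p^e}$ and doing the dimension count.
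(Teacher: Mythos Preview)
Your proof is correct and follows essentially the same approach as the paper: both rely on the explicit form $\Phi_{p^e}(x)=1+x^{p^{e-1}}+\cdots+x^{(p-1)p^{e-1}}$ to conclude that the coefficient vector must be constant on each residue class modulo $p^{e-1}$. The paper phrases this via polynomial division (writing $\sum_i a_i x^{k_i}=q(x)\Phi_{p^e}(x)$ with $\deg q<p^{e-1}$, so the polynomial is $p$ shifted copies of $q$), while you give an equivalent linear-algebraic packaging by exhibiting the block-indicator basis $\{\rho_r\}$ of the relation space and counting dimensions.
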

\begin{proof}
Recall that the minimal polynomial of $\zeta_{p^e}$ is
\[ f(x)=1+x^{p^{e-1}} +x^{2p^{e-1}} + \ldots + x^{(p-1)p^{e-1}} \, .\]
For $a \in
\mathbb{Q}^s$ we have $\sum_{i=1}^s a_i \zeta_{p^e}^{k_i} = 0$ if and
only if there exists $q \in \mathbb{Q}[x]$, $\mathrm{deg}(q)<p^{e-1}$,
such that 
\[
\sum_{i=1}^s a_i x^{k_i} = q(x)f(x)
                        = \sum_{i=0}^{p-1} q(x) x^{i(p^{e-1})} \, .
\]
In other words, the polynomial $\sum_{i=1}^s a_i x^{k_i}$ consists of
$p$ appropriately translated copies of $q(x)$.  The result 
follows.
\end{proof}

\begin{proposition}
Let $f(x)=\sum_{i=1}^s a_i x^{k_i} \in \mathbb{Q}[x]$ be a polynomial
such that $0\leq k_1<\cdots<k_s<n$ and suppose that $p>s$ for  all prime divisors
$p$ of~$n$.  Then $f(\zeta_n)=0$ only if $f$ is identically zero.
\label{prop:no-small}
\end{proposition}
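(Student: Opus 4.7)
The plan is to proceed by induction on $n$. The base case $n = 1$ is immediate: the only admissible polynomial is $f = a_1$ with $k_1 = 0$, and $f(\zeta_1) = a_1$ vanishes iff $f \equiv 0$. (The hypothesis on prime divisors is vacuous here.)

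For the inductive step I fix any prime divisor $p$ of $n$, which satisfies $p > s$ by hypothesis, and factor $n = p^e m$ with $e \geq 1$ and $\gcd(p, m) = 1$. If $m = 1$, then $n$ is a prime power and Proposition~\ref{prop:char} applies directly: since $s < p$, no residue class of $k_1, \ldots, k_s$ modulo $p^{e-1}$ can contain $p$ exponents, so condition (ii) of that proposition forces every $a_i$ to vanish.

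The substantive case is $m > 1$. Using CRT I pick integers $u, v$ with $u m + v p^e = 1$ and set $\omega_p := \zeta_n^{u m}$ and $\omega_m := \zeta_n^{v p^e}$, which are primitive $p^e$-th and $m$-th roots of unity respectively (their orders follow from $\gcd(u, p^e) = \gcd(v, m) = 1$), and $\omega_p \omega_m = \zeta_n$. Grouping indices by residue $r_i := k_i \bmod p^e$ and setting
\[ c_r := \sum_{i \,:\, r_i = r} a_i \, \omega_m^{k_i \bmod m} \in \mathbb{Q}(\omega_m), \]
I obtain $f(\zeta_n) = \sum_r \omega_p^r \, c_r$, a sum over at most $s < p$ distinct residues $r$. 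Because $\gcd(p^e, m) = 1$ gives $[\mathbb{Q}(\zeta_n) : \mathbb{Q}(\omega_m)] = \varphi(p^e)$, the minimal polynomial of $\omega_p$ over $\mathbb{Q}(\omega_m)$ is still $\Phi_{p^e}$; hence the proof of Proposition~\ref{prop:char} transfers verbatim with $\mathbb{Q}(\omega_m)$ as the base field, and the same counting argument yields $c_r = 0$ for every $r$.

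To finish, I appeal to induction on $m < n$. Fixing $r$, the equation $c_r = 0$ says that the polynomial $g_r(x) := \sum_{i\,:\,r_i = r} a_i \, x^{k_i \bmod m}$ vanishes at $\omega_m$. By CRT the exponents $\{k_i \bmod m : r_i = r\}$ are distinct elements of $\{0, \ldots, m - 1\}$; $g_r$ has at most $s$ nonzero terms; and every prime divisor of $m$ also divides $n$, hence exceeds $s$. The inductive hypothesis thus yields $g_r \equiv 0$, killing every $a_i$ with $r_i = r$, and ranging over $r$ completes the induction. The only mildly delicate point is the transfer of Proposition~\ref{prop:char} to the coefficient field $\mathbb{Q}(\omega_m)$, and this reduces to the degree identity noted above.
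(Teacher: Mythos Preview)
Your proof is correct, but it takes a genuinely different route from the paper's. The paper argues in one shot via the tensor decomposition (Proposition~\ref{prop:iso}): writing $n=p_1^{e_1}\cdots p_m^{e_m}$, it maps $\zeta_n^{k_i}$ to $\zeta_{p_1^{e_1}}^{\ell_{i1}}\otimes\cdots\otimes\zeta_{p_m^{e_m}}^{\ell_{im}}$, observes via Proposition~\ref{prop:char} that for each $j$ the distinct values among $\zeta_{p_j^{e_j}}^{\ell_{1j}},\ldots,\zeta_{p_j^{e_j}}^{\ell_{sj}}$ are $\mathbb{Q}$-linearly independent (since $s<p_j$), and concludes that the $s$ simple tensors---which are pairwise distinct by CRT---are linearly independent in $\bigotimes_j \mathbb{Q}(\zeta_{p_j^{e_j}})$. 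Your argument instead peels off one prime at a time by induction, and in place of the tensor step you invoke Proposition~\ref{prop:char} with $\mathbb{Q}(\omega_m)$ as the coefficient field, justified by the degree identity $[\mathbb{Q}(\zeta_n):\mathbb{Q}(\omega_m)]=\varphi(p^e)$.

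What each buys: the paper's approach is non-inductive and dovetails with the $\odot$-machinery used elsewhere in Section~\ref{sec:nc} (Corollary~\ref{corl:split}), so it is the natural proof in context. Your approach is more self-contained---it avoids tensor products entirely---at the cost of needing the relative irreducibility of $\Phi_{p^e}$ over $\mathbb{Q}(\omega_m)$. One small remark: your inductive hypothesis is stated for $\zeta_m=e^{2\pi i/m}$ while you apply it to an arbitrary primitive $m$-th root $\omega_m$; this is harmless since $g_r$ has rational coefficients and $\omega_m$ is Galois-conjugate to $\zeta_m$, but it is worth a word.
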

\begin{proof}
Write $n=p_1^{e_1}\cdots p_m^{e_m}$ for the prime factorization of
$n$.  Write $\ell_{ij}:= k_i \bmod {p_j^{e_j}}$ for $i=1,\ldots,s$ and
  $j=1,\ldots,m$.  By the Chinese Remainder Theorem the $m$-tuples
  $\ell_i=(\ell_{i1},\ldots,\ell_{im})$, $i=1,\ldots,s$, are all distinct.
Now we have
\begin{eqnarray*}
f(\zeta_n)=0 & \Leftrightarrow & \sum_{i=1}^s a_i \zeta_n^{k_i}=0\\
             & \Leftrightarrow & \sum_{i=1}^s a_i (\zeta_{p_1^{e_1}}^{\ell_{i1}}
                           \otimes \cdots \otimes \zeta_{p_m^{e_m}}^{\ell_{im}}) = 0 \, .
\end{eqnarray*}
But, by Proposition~\ref{prop:char}, $\left\{
\zeta_{p_j^{e_j}}^{\ell_{1j}},\ldots,\zeta_{p_j^{e_j}}^{\ell_{sj}}\right\}$ is
a linearly independent set in $\mathbb{Q}(\zeta_{p_j^{e_j}})$ 
for all $j=1,\ldots,m$ (possibly listed with repetitions).  It follows that
\[ \left\{ \zeta_{p_1^{e_1}}^{\ell_{i1}} \otimes \cdots \otimes \zeta_{p_m^{e_m}}^{\ell_{im}} : i=1,\ldots,s\right\} \]
is a linearly independent set in $\mathbb{Q}(\zeta_n)$.
Since the $\ell_i$ are all distinct we conclude that $a_1=\cdots=a_s=0$.
\end{proof}

\subsection{Putting Things Together}
\label{sec:new}



\theosparse*
\begin{proof}
Given $f(x) =\sum_{i=0}^s a_i x^{k_i}$ and $n\in\mathbb{N}$, we wish
to determine whether $f(\zeta_n)=0$.  

Since integer division is in {\NC}, given $n\in \mathbb{N}$ one can compute
in {\NC} a factorisation $n=p_1^{e_1} \cdots p_\ell^{e_\ell} m$ such that all
$p_1,\ldots,p_{\ell} \leq s$ are prime and all prime factors of $m$ are
strictly greater than~$s$.

Let $\boldsymbol{k}=(k_1,\ldots,k_s)$.
We use Propositions~\ref{prop:char} and~\ref{prop:no-small} to compute 
the vanishing spaces $V_{p_i^{e_i}}^{\boldsymbol{k}}$ for
$i=1,\ldots,\ell$ and $V_m^{\boldsymbol{k}}$.  More precisely,
to compute $V_m^{\boldsymbol{k}}$, we let $0 \leq k'_1<\cdots < k'_t < m$
be a list of the distinct residues of $k_1,\ldots,k_s$ modulo $m$, and define 
a map $T:\mathbb{Q}^s\rightarrow \mathbb{Q}^t$ by $T(a_1,\ldots,a_s)=(b_1,\ldots,b_t)$,
where $b_i := \sum \{ a_j : k_j \equiv k'_i \bmod m\}$ for $i=1,\ldots,\ell$.  Then
$V_m^{\boldsymbol{k}}$ is the pre-image under $T$ of $V_m^{(k'_1,\ldots,k'_t)}$. 
Sine all prime factors of $m$ are greater than $s$, by Proposition~\ref{prop:no-small}, $V_m^{\boldsymbol{k}}$ is the preimage of~$T(\boldsymbol{0})$.
The computation
of $V_{p_i^{e_i}}^{\boldsymbol{k}}$ is analogous and uses Proposition~\ref{prop:char}.
Moreover, since only
integer division is required to specify the linear map $T$, the given characterisations can be computed in {\NC}~\cite{bch}.
Finally, the orthogonal complements of the above-computed vanishing spaces can also be derived 
in~{\NC}~\cite{mulmuley}.  

By Corollary~\ref{corl:split} we have that 
\begin{align*} (V_n^{\boldsymbol{k}})^\perp = \; 
   (V_{p_1^{e_1}}^{\boldsymbol{k}})^\perp \odot \cdots
   \odot 
   (V_{p_\ell^{e_\ell}}^{\boldsymbol{k}})^\perp  \odot (V_m^{\boldsymbol{k}})^\perp \, . 
   \end{align*}
But, as observed above, such an iterated product can be computed in
$\NC$.  Finally, with 
$(V_n^{\boldsymbol{k}})^\perp$ in hand we can directly test whether
$f(\zeta_n)=0$.
\end{proof}

We use the above compositional  technique to recover the  result of Migotti~\cite{migotti} (see also Bang~\cite{BANG}): 

\begin{restatable}{proposition}{cyclpoly}
For all $n\in \mathbb{N}$ with at most two odd prime divisors,
all coefficients of the $n$-th cyclotomic polynomial lie in $\{-1,0,1\}$.
\end{restatable}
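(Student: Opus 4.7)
The plan is first to reduce the problem to the case $n = pq$ for distinct odd primes $p, q$, and then to apply the compositional identity of Corollary~\ref{corl:split}. For the reduction, write $n = 2^a p^b q^c$ and use the standard cyclotomic identities $\Phi_{pm}(x) = \Phi_m(x^p)$ (when $p \mid m$) and $\Phi_{2m}(x) = \Phi_m(-x)$ (for odd $m > 1$), both of which preserve the $\{-1, 0, 1\}$-coefficient property. These reduce $\Phi_n$ to one of $\Phi_1, \Phi_2, \Phi_p, \Phi_{2p}, \Phi_{pq}$; the first four have coefficients in $\{-1, 0, 1\}$ by direct inspection.

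For $n = pq$, I would choose $\boldsymbol{k} = (0, 1, \ldots, \varphi(pq))$, so that $V_{pq}^{\boldsymbol{k}}$ is one-dimensional and spanned by the coefficient vector of $\Phi_{pq}$. By Corollary~\ref{corl:split}, $(V_{pq}^{\boldsymbol{k}})^\perp = (V_p^{\boldsymbol{k}})^\perp \odot (V_q^{\boldsymbol{k}})^\perp$. Grouping terms by the residue of $k_i$ modulo $p$, a vector lies in $V_p^{\boldsymbol{k}}$ if and only if the partial sums $\sum_{i : k_i \equiv r \bmod p} a_i$ are equal across all $r \in \mathbb{Z}/p$; hence $(V_p^{\boldsymbol{k}})^\perp$ is spanned by the $\{-1, 0, 1\}$-vectors $(\mathbb{1}[k_i \equiv r \bmod p] - \mathbb{1}[k_i \equiv r' \bmod p])_i$, and analogously for $q$. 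Dualising the Hadamard product, a vector $\boldsymbol{c} \in \mathbb{Q}^{\varphi(pq) + 1}$ lies in $V_{pq}^{\boldsymbol{k}}$ if and only if there exist $u \in \mathbb{Q}^p$ and $v \in \mathbb{Q}^q$ with $c_k = u_{k \bmod p} + v_{k \bmod q}$ for every $0 \leq k \leq \varphi(pq)$, and additionally $u_{k \bmod p} + v_{k \bmod q} = 0$ for the remaining $p + q - 2$ indices $k \in \{\varphi(pq) + 1, \ldots, pq - 1\}$.

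The $p + q - 2$ vanishing constraints above are the edges of a bipartite graph $G$ on vertex set $\mathbb{Z}/p \sqcup \mathbb{Z}/q$. The pivotal claim will be that $G$ is a spanning forest with exactly two connected components; granting this, the constraint $u_r = -v_t$ propagated along each component forces $u$ and $v$ to be constant on that component (with opposite signs across the two parts), so $u$ and $v$ together take only two distinct values $\lambda_1, \lambda_2$. Hence $c_k \in \{0, \pm(\lambda_1 - \lambda_2)\}$, and normalising $\Phi_{pq}$ to be monic yields $c_k \in \{-1, 0, 1\}$. The main obstacle will be establishing acyclicity of $G$, which I expect to follow from a B\'ezout-style argument: any cycle corresponds to a closed integer walk whose steps alternate between non-zero multiples of $p$ and of $q$, and the restricted index range $\{1, \ldots, p + q - 2\}$ is too small to accommodate such a non-trivial closed walk when $\gcd(p, q) = 1$.
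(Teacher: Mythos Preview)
Your proposal is correct but follows a genuinely different route from the paper's own proof.  The paper works directly with $n=2^{k_3}p^{k_1}q^{k_2}$: it writes the vanishing space $V_n$ as the null space of an explicit $\varphi(n)\times(\varphi(n)+1)$ matrix~$C$, exhibits~$C$ as a submatrix of the tensor product $N(p,k_1)\otimes N(q,k_2)\otimes N(2,k_3)$, proves in Appendix~\ref{app:totuni} that this tensor product is totally unimodular, and then reads off $a_i=\det(D_i)/\det(D)\in\{-1,0,1\}$ via Cramer's rule.  You instead first strip prime powers and the factor~$2$ by the standard identities, reducing to $n=pq$, and then --- rather than invoking total unimodularity --- dualise Corollary~\ref{corl:split} to the rank-one description $c_k=u_{k\bmod p}+v_{k\bmod q}$ and finish with a bipartite-graph argument.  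Your forest claim is true and can be completed cleanly along the lines you sketch: with $p<q$ the $q$-steps in the closed walk are exactly $\pm q$, which forces all the sign bits $\epsilon_i$ to coincide, whence $c_{i+1}\equiv c_i-q\pmod p$ cyclically, giving $p\mid m$ while $m\le p-2$.  Your argument is more elementary and closer in spirit to the classical combinatorial proofs of Migotti's theorem; the paper's argument avoids the reduction step, handles higher prime powers and the factor~$2$ uniformly, and ties the result to the total-unimodularity machinery developed in the appendix.
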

\begin{proof}
Let $n=p^{k_1}q^{k_2}2^{k_3}$ be an integer with two odd prime divisors~$p$ and $q$.
Given two integers $m,k$, denote by $V_m$  the vanishing
space $\{ a \in \mathbb{Q}^{\varphi(m)+1}: 
\sum_{i=0}^{\varphi(m)} a_i\zeta_m^i=0\}$.
Let~$\mathbb{1}_{m} \in \mathbb{Q}^m$ be the column-vector  with all entries equal to $1$, 
and ~$I_{m}\in \mathbb{Q}^{m\times m}$ be the identity matrix. 
Let $e_{m,k}\in \mathbb{N}^{m^{k-1}}$ be the row vector whose last coordinate is~$1$, with
all other coordinates zero.

Proposition~\ref{prop:char} characterizes the vanishing space  $V_m$ for primes~$m$ as the null space 
of the matrix~$\begin{bmatrix}\mathbb{1}_{m-1} &-I_{m-1}\end{bmatrix}$.
It also states that $V_{m^k}$ for prime powers~$m^k$ 
is the null space 
of $N(m,k):=\begin{bmatrix}\mathbb{1}_{m-1} &-I_{m-1}\otimes e_{m,k}\end{bmatrix} $.

By Proposition~\ref{prop:iso}, 
the vanishing space $V_n$ is the set of solutions of 
\begin{equation} \label{eq:CD}
C \cdot  \begin{bmatrix}
a_0  & \cdots & a_{\varphi(n)} 
\end{bmatrix}^T= 0. 
\end{equation}
where $C$ is the $\varphi(n)\times (\varphi(n)+1)$ matrix 
whose $i$-th column is the tensor product of  column
$i \, {\bmod} \, p^{k_1}$ in $N(p,k_1)$, column
$i \, {\bmod} \, q^{k_2}$ in $N(q,k_2)$ and column
$i \, {\bmod} \, 2^{k_3}$ in~$N(2,k_3)$. Since
$C$ is a submatrix of 
$N(p,k_1)\otimes N(q,k_2)\otimes N(2,k_3)$, by  
 Proposition~\ref{prp:tupq} in Appendix~\ref{app:totuni}, 
it is totally unimodular. 

Since $\zeta_n$ has norm one,
the constant term of $\Phi_n(x)$ is~$1$. 
We then search for the unique solution of~\eqref{eq:CD} 
such that  $a_0=1$. Since all entries in the  
zero-th column of $C$ are~$1$,  the  submatrix~$D$ of~$C$ 
obtained by deleting this column is such that
\begin{equation*} 
D\cdot \begin{bmatrix}
 a_1 & \cdots & a_{\varphi(n)} 
\end{bmatrix}^T= -\mathbb{1}_{\varphi(n)}
\end{equation*}
Let $D_i$ be obtained from~$D$ by replacing the $i$-th column 
with $-\mathbb{1}_{\varphi(n)}$.
Then both $det(D)$ and $det(D_i)$ are minors of $C$. 
 Applying Cramer's rule,  
    $a_i=\frac{det(D_i)}{det(D)}\in \{-1,0,1\}$.

\end{proof}

\section{Diagonal Circuits}\label{sec:diagonal}
\newcommand{\x}{{\bf x}}
\newcommand{\y}{{\bf y}}
\newcommand{\ppoly}{{\sf P/poly}}

In this section we study CIT for the class of diagonal
circuits~\cite{saxena-diagonal}.  In the multivariate 
formulation of CIT these compute polynomials 
of the form $f=\sum_{i=1}^s g_i^{d_i}$, where the $g_i$ are linear forms
and the $d_i$ are integers in unary.
The resulting CIT problem is a special case of {\skct} and generalises 
{\sct}.  
Note that in our univariate formulation the $g_i$ become 
polynomials in sparse representation.
We start with the following hardness result.

\begin{theorem}
\label{thm:shiftedsct}
If CIT for diagonal circuits is solvable in polynomial time then
PIT for algebraic circuits of size $s$ and degree $d \leq s$ can be 
solved in $s^{O(\sqrt{d})}$ time. 
\end{theorem}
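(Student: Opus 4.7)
The plan is to use a standard reduction chain combining (i) the ``depth-four chasm'' of Agrawal--Vinay and Tavenas, (ii) Fischer's identity (the Waring decomposition of the product monomial), and (iii) Kronecker substitution followed by evaluation at a suitably high-order root of unity. Each of the three steps blows up the instance size by at most a factor of $s^{O(\sqrt{d})}$, so a polynomial-time CIT algorithm for the resulting diagonal instance gives an $s^{O(\sqrt{d})}$-time algorithm for PIT.

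I would first apply Tavenas' depth-reduction theorem: given an arithmetic circuit of size $s$ computing a polynomial $f$ of degree $d\leq s$, one produces in polynomial time an equivalent representation
\[
f \;=\; \sum_{i=1}^{N} \prod_{j=1}^{a_i} L_{ij},
\]
with $N = s^{O(\sqrt{d})}$, each $a_i = O(\sqrt{d})$, and each $L_{ij}$ a sparse polynomial of degree $O(\sqrt{d})$ with at most $s^{O(\sqrt{d})}$ monomials. Next I would apply Fischer's identity, which writes any $k$-fold product $y_1\cdots y_k$ as an explicit rational linear combination of $2^{k-1}$ terms of the form $\bigl(\sum_{j=1}^k \varepsilon_j y_j\bigr)^k$ with $\varepsilon_j \in \{-1,+1\}$. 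Substituting $y_j := L_{ij}$ and $k := a_i = O(\sqrt{d})$, each inner product becomes a sum of $2^{O(\sqrt{d})}$ powers of sparse polynomials (a signed sum of sparse $L_{ij}$'s is still sparse, with at most $s^{O(\sqrt{d})}$ monomials). After clearing the rational denominators $2^{k-1} k!$ and collecting over $i$, the whole polynomial is expressed as
\[
c\cdot f \;=\; \sum_{\ell=1}^{M} h_\ell^{\,e_\ell},
\]
with $M = s^{O(\sqrt{d})}$ outer terms, each $h_\ell$ a multivariate sparse polynomial, each exponent $e_\ell = O(\sqrt{d})$ written in unary, and $c$ a nonzero integer scalar.

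Finally I would reduce to the univariate CIT instance. Applying Kronecker substitution $x_i \mapsto x^{(d+1)^{i-1}}$ turns each $h_\ell$ into a univariate sparse polynomial $\widetilde h_\ell$ whose exponents have binary bit-length $O(s\log d)$, and preserves zeroness. I would then choose $n$ with $\varphi(n)$ strictly greater than $\deg\bigl(\sum_\ell \widetilde h_\ell^{\,e_\ell}\bigr) = O((d+1)^s)$---for instance a suitable power of two, so $n$ has bit-length $O(s\log d)$---so that the cyclotomic polynomial $\Phi_n$ has larger degree than the transformed polynomial, making $\bigl(\sum_\ell \widetilde h_\ell^{\,e_\ell}\bigr)(\zeta_n) = 0$ equivalent to $\sum_\ell \widetilde h_\ell^{\,e_\ell} \equiv 0$, and hence to $f \equiv 0$. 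The resulting diagonal CIT instance has total bit-length $s^{O(\sqrt{d})}$, and feeding it to the hypothetical polynomial-time algorithm solves PIT within the claimed time bound.

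The main obstacle is purely bookkeeping: at every step I must verify that (i) the outer sum length stays $s^{O(\sqrt{d})}$, (ii) each inner polynomial remains \emph{sparse} with $s^{O(\sqrt{d})}$ monomials, and (iii) each exponent $e_\ell$ remains $O(\sqrt{d})$, so that its unary encoding is polynomial. These three invariants, together with tight control over the degree after Kronecker substitution, are exactly what force the final instance to lie in the diagonal-circuit model rather than in a more general depth-$3$ or depth-$4$ class.
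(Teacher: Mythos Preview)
Your proposal is correct and follows essentially the same route as the paper: reduce to a sum-of-powers-of-sparse form, apply Kronecker substitution, then evaluate at a root of unity of sufficiently high order. The only difference is that the paper black-boxes the decomposition $P=\sum_i c_i Q_i^{d_i}$ (with $s^{O(\sqrt d)}$ terms, $Q_i$ sparse of degree $O(\sqrt d)$) by citing \cite[Theorem~5.17]{saptharishi-survey} and \cite[Proposition~1]{kkps}, whereas you explicitly reconstruct it via Tavenas' depth-four reduction followed by Fischer's identity---which is precisely how those cited results are proved. One small slip: the depth-reduction step is not ``polynomial time'' in $s$ but in the output size $s^{O(\sqrt d)}$; this does not affect the final bound.
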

\begin{proof}
It is known
(see for example~\cite[Theorem~5.17]{saptharishi-survey} 
or~\cite[Proposition~1]{kkps}) that an $m$-variate polynomial 
$P(\x)$ of degree at most $d = m^{O(1)}$ computed by an 
algebraic circuit of size $s = m^{O(1)}$  
can be expressed as 
$P(\x) = \sum_{i=1}^{s^{O(\sqrt{d})}} c_i Q_i(\x)^{d_i}$,
where the $Q_i$ have $s^{O(\sqrt{d})}$ monomials and degree $O(\sqrt{d})$.
Moreover such a representation can be computed $s^{O(\sqrt{d})}$ time.
As a result, a deterministic $\poly(s, d, m)$-time algorithm for identity testing polynomials of the above form will yield a $s^{O(\sqrt{d})}$-time algorithm for identity testing arbitrary algebraic circuits. Applying Kronecker substitution~\cite{kronecker} on $P(\x)$ expressed in the above form yields a univariate polynomial $p(x)=\sum_{i=1}^s c_i (a_{i1}x^{e_{i1}} + \dots + a_{is}x^{e_{is}})^{d_i}$
such that $P(\x) \neq 0$ if and only if $p(x) \neq 0$. 
Note that for all~$n$ such that $\varphi(n)> d(d+1)^m$, the degree of $p$ is smaller  than the $n$-the cyclotomic polynomial. Thus $p(x) \neq 0$ if and only if~$p(\zeta_n) \neq 0$.  
\end{proof}

In Appendix~\ref{sec:saxena}, Proposition~\ref{prop:shift}
builds on the proof of Theorem~\ref{thm:shiftedsct} to show 
that efficient algorithms for another simple variant of sparse-CIT, namely evaluating low-degree sparse multivariate polynomials $f(x_1, \dots, x_k)$ at \emph{translations} of roots of unity, will yield a sub-exponential-time algorithm for PIT. More formally, if evaluating $f(a_1 + \zeta_n^{e_1}, \dots, a_k + \zeta_k^{e_k})$ where $f$ is a $k$-variate, $k^{O(1)}$-sparse polynomial and $a_i \in \mathbb{Q}$, $e_i \in \mathbb{N}$ are specified in binary is decidable in $\poly(k, \log{n})$ time, then PIT for algebraic circuits of size $s$ and degree $d$ can be decided in $s^{O(\sqrt{d})}$-time.

We now specialise to consider the subclass of diagonal circuits that compute
polynomials of the form $f=\sum_{i=1}^s g^{d_i}$ with $g$ a single 
univariate polynomial in sparse representation.
We give an algorithm that solves the CIT for this class of circuits in polynomial time 
assuming GRH.

\begin{figure}[t]
\begin{center}
\begin{tabular}{rp{.9\textwidth}}
 \hline
    \multicolumn{2}{c}{ \bf  Restricted Diagonal Circuits }  \\
    \hline
 & {\bf Input:} Polynomial $f=\sum_{i=1}^s g^{d_i}$, 
 with $g$ in sparse representation and $d_i$ in unary
\\\hline
& {\bf Output:} Whether $f(\zeta_n)=0$ for $n$ written in binary.\\\hline
  1:& Set $d:=\max_{i=1}^s d_i$.\\
  2:& Compute the orbit $\mathrm{Orb}(g(\zeta_n))$ of $g(\zeta_n)$ w.r.t.\ the set $\{ k \in \Zn^* : k \leq G(n) \}$.\\
  3:& If $|\mathrm{Orb}(g(\zeta_n))| > d$ then return 'Non-Zero'.\\
  4:& If $|\mathrm{Orb}(g(\zeta_n))| \leq d$ then compute $\alpha \in \mathbb{Q}(i)$ such that $|\alpha-f(\zeta_n)|<\frac{\varepsilon(f)}{3}$ and 
         return `Zero' if $\alpha<\frac{\varepsilon(f)}{3}$ and return `Non-Zero' otherwise.
\\
   \hline
\end{tabular}
\caption{Algorithm for Restricted Diagonal Circuits}
\label{fig:algorithm-3}
\end{center}
\vspace{-.5cm}
\end{figure}

\theodiamond*
\begin{proof}
The algorithm is given in Figure~\ref{fig:algorithm-3}.  
It involves an integer parameter $G(n)$ and a rational parameter 
$\varepsilon(f)$ that are both functions of the input.
We will say more about both parameters shortly, suffice to say 
for now that $G(n)$ is chosen such that
$\{ k \in \Zn^* : 1 \leq k \leq G(n)\}$ generates~$\Zn^*$.   

We first argue correctness and then move to analysing the complexity.
Line~2 refers to the action 
of the group~$\Zn^*$ on the field $\mathbb{Q}(\zeta_n)$, obtained by associating with  
$k \in \Zn^*$ the automorphism  $\sigma_k$ of $\mathbb{Q}(\zeta_n)$ defined by
$\sigma_k(\zeta_n)=\zeta_n^k$.  
Specifically,
Line~2 computes the orbit
$\mathrm{Orb}(g(\zeta_n))$ of
$g(\zeta_n)$ under the subgroup 
of $\Zn^*$ generated by the set
$\{ k \in \Zn^* : k \leq G(n)\}$, that is,
the smallest set that contains
$g(\zeta_n)$ and is closed under the action of the aforementioned subgroup.

Observe that when the algorithm halts in Line~3  the output is correct:
 if  $g(\zeta_n)$ has more than~$d$ distinct conjugates then it cannot be that
 $f(\zeta_n)=\sum_{i=1}^s g(\zeta_n)^{d_i}=0$.

Now suppose that $|\mathrm{Orb}(g(\zeta_n))| \leq d$ in Line~3.  
We will use this assumption to bound the degree and height of $f(\zeta_n)$.
By the assumption that $\{ k \in \Zn^* : 1 \leq k \leq G(n)\}$ generates $\Zn^*$,
we have that $\mathrm{Orb}(g(\zeta_n))$ consists of all Galois conjugates of $g(\zeta_n)$.  Since
$|\mathrm{Orb}(g(\zeta_n))| \leq d$ it follows that
$g(\zeta_n)$, and hence also~$f(\zeta_n)$, have degree at most $d$.
Furthermore, for every $\ell\in \Zn^*$ we have $|f(\zeta^\ell_n)| \leq sM^d$, where 
$M$ is the sum of the absolute value of all coefficients of $g$.
By writing the coefficients 
of the minimal polynomial of $f(\zeta_n)$ in terms of the Galois conjugates of $f(\zeta_n)$,
we have that $f(\zeta_n)$ has height at most $H:=2^d sM^d$.  

But a non-zero algebraic number of degree $d$ and height~$H$ has magnitude at 
least $\frac{2}{d^{d+1}H^d}$~\cite{Mignotte1983}.  Thus if we define
\begin{gather}
    \varepsilon(f):=\frac{2}{d^{d+1}H^d} \, ,
    \label{eq:epsilon}
\end{gather} we have that
if $f(\zeta_n)\neq 0$ then $|f(\zeta_n)|>\varepsilon(f)$: hence
for the number $\alpha$ computed in Line~4 we have
$|\alpha|>\frac{2\varepsilon(f)}{3}$.  On the other hand,
if $f(\zeta_n)=0$ then $|\alpha|<\frac{\varepsilon(f)}{3}$.
Thus the output produced in Line~4 is correct.
This completes the proof that the algorithm gives the correct output.  

We turn now to the complexity.
Note that we can use the procedure presented in the previous section
to determine in polynomial time the equality of two 
conjugates $g(\zeta_n^\ell)$ and~$g(\zeta_n^j)$ 
of $g(\zeta_n)$.
Since the computation of $\mathrm{Orb}(g(\zeta_n))$ terminates 
as soon as $|\mathrm{Orb}(g(\zeta_n))|>d$, we see that Line~2 can 
be executed in time polynomial in the size of the input and the parameter $G(n)$.
Now it was shown in~\cite{Montgomery71} that under GRH there is a 
function $G(n)=O(\log^2 n)$ such that 
$\{ k \in \Zn^* : 1 \leq k \leq G(n)\}$ generates 
$\Zn^*$.\footnote{The paper~\cite{Bach93} gives 
heuristic arguments and experimental data suggesting that the choice $G(n)=(\ln 2)^{-1} \ln n \ln \ln n$} will yield 
a set of generators.
It follows that Line 2 of the procedure can be executed in
polynomial time, assuming GRH.  Finally, from Expression~\eqref{eq:epsilon} we see that 
$|\log(\varepsilon(f))|$
is polynomially bounded in the input size.  
Thus $f(\zeta_n)$ can be computed to within precision~$\frac{\varepsilon(f)}{3}$ in 
polynomial time, e.g., using the approach via the Taylor expansion as described in Section~\ref{sec:RandNC}.
\end{proof}

\bibliographystyle{plain}
\bibliography{refer}

\newpage 

\appendix

\section{Missing Proofs in Section~\ref{sec:nc}}\label{app:totuni}

Recall that~\cite[Chapter~19]{TUref} given a $m\times n$ matrix~$A$ where all entries are in $\{1,0,-1\}$,
$A$ is totally unimodular (TU) if and only if for all subsets $C \subseteq \{1,\cdots, n\}$ of columns,
there exists a coloring of $C$ with $B$ and $G$ such that for all rows $i\in \{1,\ldots, m\}$:
\begin{equation}\label{eq:conditiontu}
\sum_{j\in B} a_{i,j}-\sum_{j\in G} a_{i,j}\in \{0,1,-1\}.
\end{equation}
Given a  matrix~$A$, we say that $A$ is nonnegative TU if
there exists a coloring that in addition to~\eqref{eq:conditiontu} for all rows $i\in \{1,\ldots, m\}$ it satisfies 
$\sum_{j\in B} a_{i,j}-\sum_{j\in G} a_{i,j}\in \{0,1\}$; we say that  
$A$ is constant TU if it is nonnegative and for all rows the above sum takes the same value.

Let $e$ be a vector whose coordinates are all zero, except one that equals $1$. 
For all integers~$m$ and all
constant-TU matrix~$M$ the followings hold:
\begin{itemize}
\item $M \otimes e$ is constant TU,
\item $e \otimes M$ is constant TU, 
\item $\begin{bmatrix}\mathbb{1}_{m} &-I_{m}\end{bmatrix}$ is nonnegative TU,
\item $\begin{bmatrix}\mathbb{1}_{m} &-I_{m}\end{bmatrix}\otimes M $ is nonegative TU. 
\end{itemize}

\begin{proposition}\label{prp:tupq}
Let  $p,q$ be primes and $e_{k}\in \mathbb{N}^k$ a  row vector whose coordinates are all zero, except last one that equals $1$. For
all numbers~$k_1,k_2,k_3 \in \mathbb{N}$,
\[\begin{bmatrix}\mathbb{1}_{p-1} &-I_{p-1} \otimes e_{k_1} \end{bmatrix} \otimes \begin{bmatrix}\mathbb{1}_{q-1} &-I_{q-1} \otimes e_{k_2}\end{bmatrix}  \otimes \begin{bmatrix}\mathbb 1 &-1 \otimes e_{k_3}\end{bmatrix}  \]
is totally unimodular.
\end{proposition}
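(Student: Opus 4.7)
The plan is first to reduce the proposition to an unpadded tensor product via zero-column insertions, then to combine the fourth bullet with a mild extension of it.

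Observe that $-I_{m-1} \otimes e_k$ is obtained from $-I_{m-1}$ by inserting $k-1$ zero columns before each of its columns; hence each factor $[\mathbb{1}_{m-1} \mid -I_{m-1} \otimes e_k]$ differs from $[\mathbb{1}_{m-1} \mid -I_{m-1}]$ only by such zero-column insertions, and the tensor product in the statement is obtained from
\[
T \;:=\; [\mathbb{1}_{p-1} \mid -I_{p-1}] \otimes [\mathbb{1}_{q-1} \mid -I_{q-1}] \otimes [1 \mid -1]
\]
by inserting zero columns. Since adding zero columns preserves total unimodularity (any square submatrix containing such a column has zero determinant), it suffices to show that $T$ is totally unimodular.

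Since $[1 \mid -1]$ has a single row, it is trivially constant TU, so the fourth bullet applied with $m = q-1$ yields that $M_1 := [\mathbb{1}_{q-1} \mid -I_{q-1}] \otimes [1 \mid -1]$ is nonnegative TU. We then invoke the following mild extension of the fourth bullet: for every integer $m$ and every nonnegative TU matrix $M$, the tensor product $[\mathbb{1}_m \mid -I_m] \otimes M$ is totally unimodular. Applied with $m = p-1$ and $M = M_1$, this shows that $T$ is TU, as required.

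To justify the extension, let $S$ be a column subset of $[\mathbb{1}_m \mid -I_m] \otimes M = [\mathbb{1}_m \otimes M \mid -I_m \otimes M]$ and decompose $S$ into $S_0$ (columns drawn from the first block, which is a vertical stack of $m$ copies of $M$) together with $S_1, \ldots, S_m$ (columns from the $m$ block-diagonal copies of $-M$). Since $M$ is nonnegative TU, pick a 2-coloring of $S_0$ giving $M$-row sums $u(s) \in \{0, 1\}$, and, independently for each $i$, a 2-coloring of $S_i$ (its columns viewed as columns of $M$) giving $M$-row sums $w_i(s) \in \{0, 1\}$. Because contributions from the block-diagonal copies $j \neq i$ vanish at row block $i$, the row sum of the tensor at position $(i, s)$ equals $u(s) - w_i(s) \in \{-1, 0, 1\}$, and Ghouila-Houri's criterion yields total unimodularity. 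The main subtle point is precisely this last step: the fourth bullet does not extend verbatim from constant TU to nonnegative TU inputs, but the weaker TU (rather than nonnegative TU) conclusion is still achievable thanks to the $+/-$ sign cancellation built into $[\mathbb{1}_m \mid -I_m]$, which is exactly what the proposition asks for.
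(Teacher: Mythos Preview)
Your proof is correct and follows essentially the same approach as the paper: both reduce to verifying Ghouila--Houri's criterion on $[\mathbb{1}_{p-1}\mid -I_{p-1}]\otimes M$ for a nonnegative-TU matrix $M$ built via the fourth bullet from the constant-TU matrix $[1\mid -1]$, by colouring each block-column separately using the nonnegative-TU colouring of $M$ and observing that at every block-row the only nonzero contributions come from the first block-column and one diagonal block. The only cosmetic difference is that you strip the zero columns introduced by the $e_{k_i}$'s at the outset, whereas the paper keeps them and deduces the statement by noting that the target matrix is a column-submatrix of the padded tensor $N$.
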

\begin{proof}
We prove the totally unimodularity of a matrix $N$ defined by
\[\begin{bmatrix}\mathbb{1}_{p-1} &-I_{p-1}  \end{bmatrix} \otimes e_{k_1} \otimes \begin{bmatrix}\mathbb{1}_{q-1} &-I_{q-1} \end{bmatrix}  \otimes e_{k_2} \otimes \begin{bmatrix}\mathbb 1 &-1 \end{bmatrix} \otimes e_{k_3}.\] By totally unimodualrity of all submatrices of a TU matrix the result follows. 

Since  $\begin{bmatrix}\mathbb 1 &-1\end{bmatrix}$ is
constant TU, then $ e_{k_2} \otimes \begin{bmatrix}\mathbb 1 &-1 \end{bmatrix} \otimes e_{k_3} $
is constant TU.
Since $\begin{bmatrix}\mathbb{1}_{q-1} &-I_{q-1}\end{bmatrix}$ is nonnegative TU, we get that 
$M:=e_{k_1} \otimes \begin{bmatrix}\mathbb{1}_{q-1} &-I_{q-1}\end{bmatrix} \otimes e_{k_2} \otimes \begin{bmatrix}\mathbb 1 &-1\end{bmatrix} \otimes e_{k_3}$ is nonngative TU.

Given a set~$C\subseteq \{1,\ldots, n\}$ of columns for $N$,  we  colour columns in $C$ with $G$ and $B$
by viewing  $N$  as 
\[N:=\begin{bmatrix} M&-M & 0 &\ldots &0\\
M &0 & -M &\ldots &0\\
\vdots &\vdots & \vdots &\vdots &\vdots\\
M &0 & 0 &\ldots &-M\\
\end{bmatrix}.\]
Define
\[C_i=C \cap \{j+(i-1)p\mid 1\leq j\leq q\}\]
Color all columns in $C_1$, that is $C$ restricted to first $q$ columns of $N$, with $G$.
Since $M$ is nonnegative TU,  
 the sum  of  entries of  columns $C_1$ is either in $\{1,0\}$ for all rows, or in $\{0,-1\}$ for all rows.
 
This property holds for all  $i=2,\ldots, p$.
For each $C_i$, if the sum of entries of $C_i$ 
agrees with the sum of $C_1$, color $C_i$ with~$B$ 
otherwise color it with $G$. 

Observe that for all rows the only nonnegative entries appear in $C_1$ and a single $C_i$, $i\in \{2,\ldots,p\}$. Hence, the above colouring satisfy~\eqref{eq:conditiontu}.
\end{proof}

\section{Missing Proofs from Section~\ref{sec:diagonal}} \label{sec:saxena}
The following \emph{duality lemma} due to Saxena~\cite{saxena-diagonal} expresses power of a multivariate linear form as a linear combination of product of univariate polynomials:

\begin{restatable}{proposition}{propduality}
	For every $m, d > 0$, there exist $\alpha_i, \beta_{ij} \in \Q$ ($0 \leq i \leq md, 0 \leq j \leq d$) such that
	
	\begin{align*}
	(a_1u_1 + \dots + a_mu_m)^d &= \sum_{i=0}^{md}\sum_{j=0}^d \beta_{ij} (u_1 + \alpha_{i1})^j \dots (u_m + \alpha_{im})^j.
	\end{align*}
\label{prop:duality}
\end{restatable}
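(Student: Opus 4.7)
My plan is to reduce the stated identity to a (formally weaker) form of Saxena's duality lemma in which only the $j = d$ summands are used, and then prove that form by a polynomial interpolation argument. Concretely, the classical form of Saxena's lemma~\cite{saxena-diagonal} asserts that there exist $\alpha_{ik}, \beta_i \in \Q$ with
\[
L(u)^d \;=\; \sum_{i=0}^{md} \beta_i \prod_{k=1}^m (u_k + \alpha_{ik})^d, \qquad L(u) := a_1 u_1 + \dots + a_m u_m.
\]
Setting $\beta_{id} := \beta_i$ and $\beta_{ij} := 0$ for $j < d$ then yields the stated identity immediately, so it suffices to establish the classical form.

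For the classical form, the starting point is the identity
\[
L(u)^d \;=\; d!\cdot [y^d]\, F(y), \qquad F(y) := \prod_{k=1}^m E_d(y\, a_k u_k),
\]
where $E_d(z) := \sum_{j=0}^{d} z^j/j!$ is the truncated exponential. This identity comes from truncating the exponential generating function equality $e^{y L(u)} = \prod_k e^{y a_k u_k}$ at degree $\leq d$ in each $u_k$; the truncation preserves the $y^d$-coefficient because $L(u)^d$ has total degree $d$. Since $F(y)$ is a polynomial in $y$ of degree $md$, Lagrange interpolation at $md+1$ distinct rational points $y_0,\ldots,y_{md}$ produces constants $\mu_i \in \Q$ such that
\[
L(u)^d \;=\; \sum_{i=0}^{md} \mu_i \prod_{k=1}^m E_d(y_i\, a_k u_k).
\]
Each univariate factor $E_d(y_i a_k u_k)$ has degree $\leq d$ in $u_k$, and a further (inner) interpolation expresses it as a $\Q$-linear combination of shifted $d$-th powers $(u_k + \alpha)^d$. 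Distributing these expansions across the product and collecting terms then gives the desired representation.

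The main obstacle is the last step: the natural product expansion yields summands in which each factor may be raised to a different exponent, while the stated form demands that within each summand all $m$ factors share a common exponent $j$. Achieving the diagonal alignment will require either a coordinated choice of the interpolation points $y_i$ together with the shifts $\alpha_{ik}$, or an additional algebraic rearrangement that uses the fact that $L(u)^d$ has total degree only $d$ (so many monomials on either side must cancel). This is where I expect the bulk of the technical work to lie.
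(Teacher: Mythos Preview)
Your reduction to the ``classical form'' is logically fine: a representation $L(u)^d=\sum_{i=0}^{md}\beta_i\prod_k(u_k+\alpha_{ik})^d$ would indeed yield the stated identity by taking $\beta_{id}=\beta_i$ and $\beta_{ij}=0$ for $j<d$. The gap is that your argument does not establish this classical form with only $md+1$ terms. Saxena's duality lemma, and the truncated-exponential interpolation you carry out, gives $L(u)^d=\sum_{i=0}^{md}\mu_i\prod_k E_d(y_i a_k u_k)$, where each factor is a degree-$d$ univariate in $u_k$ but \emph{not} a shifted $d$-th power. Your inner interpolation, writing each $E_d(y_i a_k u_k)$ as a $\Q$-linear combination of $d+1$ powers $(u_k+\alpha)^d$, is valid; but distributing over the $m$-fold product turns every outer term into $(d+1)^m$ new terms, for a total of $(md+1)(d+1)^m$ summands. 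These cannot be indexed by $i\in\{0,\ldots,md\}$, nor absorbed by the $d+1$ values of $j$ in the statement once $m\geq 2$. You have also mis-diagnosed the obstacle in your last paragraph: after your inner interpolation every exponent is $d$, so the factors already share a common exponent; the actual problem is this term-count blowup, and nothing in the outline indicates how to undo it.

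The paper sidesteps the issue by choosing a different auxiliary polynomial. Rather than the truncated exponential, it takes $P_u(z)=\prod_{k=1}^m(z+a_ku_k)-z^m$, a polynomial of degree $m-1$ in $z$ whose leading coefficient is $L(u)$. Then $L(u)^d$ is the top $z$-coefficient of $P_u(z)^d$ and can be recovered by interpolating at $md+1$ rational points $\alpha_0,\ldots,\alpha_{md}$. The key point is that each evaluation $P_u(\alpha_i)$ is already $\prod_k(a_ku_k+\alpha_i)-\alpha_i^m$, a product of shifted linear forms minus a constant; a single binomial expansion of $P_u(\alpha_i)^d$ therefore produces exactly the terms $\prod_k(u_k+\alpha_i/a_k)^j$ for $j=0,\ldots,d$. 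No secondary interpolation is needed, and one obtains precisely $(md+1)(d+1)$ summands with shifts $\alpha_{ik}=\alpha_i/a_k$ depending only on $i$ and $k$, as the statement requires.
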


We provide a proof due to Gupta et al.~\cite{gkks}
\begin{proof}
	Consider 
	\begin{align*}
	P_u(z) &= (z+a_1u_1) \dots (z+a_mu_m) - z^m \\
	&= z^{m-1} (a_1u_1 + \dots a_mu_m)^d + \mbox{lower order terms} \\
	 P_u(z)^d &= z^{(m-1)d}(a_1u_1 + \dots + a_mu_m)^d + \mbox{lower order terms} \\
	\end{align*}
	
Hence we can compute $(a_1u_1 + \dots + a_mu_m)^d$ as a coefficient of $z^{(m-1)d}$ via interpolation by evaluating $P_u(z)^d$ on $(m-1)d$ points. That is, for every distinct $\alpha_0, \dots, \alpha_{md} \in \Q$, there exist $\beta_{0}' \dots \beta_{md}'$ such that
	\begin{align*}
	(u_1 + & \dots  + u_m)^d \\ &=\sum_{i=0}^{md} \beta_i' P_u(z)^d \\
	&= \sum_{i=0}^{md} \beta_i'((a_1u_1 + \alpha_i) \dots (a_m u_m + \alpha_i) - \alpha_i^m)^d \\
	&= \sum_{i=0}^{md} \beta_i' \sum_{j=0}^d \binom{d}{j} (-\alpha_i^m)^{(d-j)}((a_1u_1 + \alpha_i) \dots (a_mu_m + \alpha_i))^j \\
	&= \sum_{i=0}^{md}\sum_{j=0}^d \beta_{ij} ((u_1 + \alpha_{i1}) \dots (u_m + \alpha_{im}))^j \\
	\end{align*}
	
	where $\beta_{ij} = \beta_i' \binom{d}{j} (-\alpha_i)^{m(d-j)} (a_1 \dots a_m)^j$ and $\alpha_{i1} = \frac{\alpha_i}{a_1}$, ..., $\alpha_{im} = \frac{\alpha_i}{a_m}$.
\end{proof}

\begin{proposition}\label{prop:shift}
Given a $k$-variate sparse polynomial $f$ and numbers
$a_1,\ldots,a_k \in \mathbb{Q}$ and  $e_1,\ldots,e_k \in \mathbb{N}$ 
in binary, if one can test $f(a_1 + \zeta_n^{e_1},\ldots, a_k + \zeta_n^{e_k}) = 0$ 
in deterministic $\poly(k, \log{n})$ time, then  
PIT for $m$-variate polynomials computed by algebraic circuits of size $s$ and degree $d \leq s = m^{O(1)}$ can be 
solved in $s^{O(\sqrt{d})}$ time. 
\end{proposition}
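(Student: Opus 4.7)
The plan is to combine the depth-reduction used in the proof of Theorem~\ref{thm:shiftedsct} with the duality lemma (Proposition~\ref{prop:duality}) so as to recast the identity test as a sparse multivariate evaluation at shifted powers of $\zeta_n$. Given an $m$-variate polynomial $P(\x)$ of degree $d \leq s = m^{O(1)}$ computed by a size-$s$ arithmetic circuit, first invoke the decomposition recalled in Theorem~\ref{thm:shiftedsct} to write $P(\x) = \sum_{i=1}^{T} c_i Q_i(\x)^{d_i}$ with $T = s^{O(\sqrt d)}$, each $Q_i$ having at most $s^{O(\sqrt d)}$ monomials, degree $O(\sqrt d)$, and $d_i \leq d$. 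Then apply Kronecker substitution $x_j \mapsto y^{(d+1)^{j-1}}$ and choose $n$ with $\varphi(n)$ strictly larger than the resulting univariate degree, which is bounded by $d(d+1)^{m-1}$; any such $n$ has $\log n = \poly(s)$, and $P \equiv 0$ iff $p(\zeta_n) = 0$ for the resulting univariate polynomial $p(y)$.

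Next, apply the duality lemma to each $q_i(y)^{d_i}$ coming from $Q_i$ under the substitution. Writing $q_i(y) = \sum_{\ell=1}^{k_i} \beta_\ell^{(i)} y^{e_\ell^{(i)}}$ with $k_i \leq s^{O(\sqrt d)}$, view $q_i(y)^{d_i}$ as the $d_i$-th power of a linear form in the formal ``monomial variables'' $u_\ell := y^{e_\ell^{(i)}}$, obtaining
\[
q_i(y)^{d_i} \;=\; \sum_{p=0}^{k_i d_i}\sum_{q=0}^{d_i} \gamma_{pq}^{(i)} \prod_{\ell=1}^{k_i} \bigl(y^{e_\ell^{(i)}} + \alpha_{p\ell}^{(i)}\bigr)^q,
\]
with rationals $\gamma_{pq}^{(i)}$ and $\alpha_{p\ell}^{(i)}$ computable in polynomial time. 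Summing over $i$ and substituting $y = \zeta_n$ yields
\[
p(\zeta_n) \;=\; \sum_{i=1}^{T}\sum_{p,q} c_i\, \gamma_{pq}^{(i)} \prod_{\ell=1}^{k_i} \bigl(\alpha_{p\ell}^{(i)} + \zeta_n^{e_\ell^{(i)}}\bigr)^q,
\]
which is already an evaluation of a sparse polynomial at shifted roots of unity, once each atomic factor is named by its own variable.

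To this end, introduce fresh variables $z_{i,p,\ell}$ and define
\[
f(\mathbf z) \;:=\; \sum_{i,p,q} c_i\, \gamma_{pq}^{(i)} \prod_{\ell=1}^{k_i} z_{i,p,\ell}^q.
\]
Because the variable sets indexed by distinct $(i,p)$ pairs are disjoint, no cancellation merges distinct $(i,p,q)$ contributions, so $f$ has at most $\sum_{i=1}^T (k_i d_i + 1)(d_i + 1) = s^{O(\sqrt d)}$ monomials in $k := \sum_{i=1}^T (k_i d_i + 1)\, k_i = s^{O(\sqrt d)}$ variables. Setting $a_{i,p,\ell} := \alpha_{p\ell}^{(i)}$ and $\epsilon_{i,p,\ell} := e_\ell^{(i)}$, the assignment recovers $f\bigl(\ldots, a_{i,p,\ell} + \zeta_n^{\epsilon_{i,p,\ell}}, \ldots\bigr) = p(\zeta_n)$, which vanishes iff $P \equiv 0$. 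A $\poly(k, \log n)$-time subroutine for the hypothesised evaluation problem therefore solves PIT for $P$ in $s^{O(\sqrt d)}$ time.

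The main obstacle is the accounting that holds the overall blow-up to $s^{O(\sqrt d)}$: one must check that composing chasm-at-depth-$4$, Kronecker substitution, and duality leaves both the monomial count of $f$ and the dimension $k$ at $s^{O(\sqrt d)}$, and that the shifts $\alpha_{p\ell}^{(i)}$ and exponents $e_\ell^{(i)}$ produced by these reductions have bit-length polynomial in $s$, so that the input size to the assumed subroutine is genuinely $\poly(s, m)$ rather than blowing up.
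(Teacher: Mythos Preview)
Your proposal is correct and follows essentially the same route as the paper: reduce to the diagonal form via Theorem~\ref{thm:shiftedsct}, apply Kronecker substitution and pick $n$ with $\varphi(n)$ exceeding the univariate degree, expand each power $q_i(y)^{d_i}$ with Saxena's duality lemma (Proposition~\ref{prop:duality}), and then introduce a fresh variable $z_{i,p,\ell}$ for each factor $(y^{e_\ell^{(i)}}+\alpha_{p\ell}^{(i)})$ to obtain a sparse multivariate $f$ whose evaluation at the corresponding shifted powers of $\zeta_n$ equals $p(\zeta_n)$. The paper's proof is terser about the parameter tracking (it overloads $s$ for both the number of outer summands and the inner sparsity, obtaining $s^2(sd+1)$ variables and $s(d+1)(sd+1)$ monomials), but your more explicit accounting with $T,k_i=s^{O(\sqrt d)}$ arrives at the same $s^{O(\sqrt d)}$ bound and the same conclusion.
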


\begin{proof}
From Theorem~\ref{thm:shiftedsct}, identity testing polynomials of the form $p(x)=\sum_{i=1}^s c_i (a_{i1}x^{e_{i1}} + \dots + a_{is}x^{e_{is}})^{d_i}$ in deterministic $\poly(s,m,d)$ time suffices to get a $s^{O(\sqrt{d})}$ time algorithm for PIT.
Using 
Proposition~\ref{prop:duality} to simplify the 
terms $(a_{i1}x^{e_{i1}} + \dots + a_{is}x^{e_{is}})^{d_i}$
we can rewrite $p(x)$ as
\begin{align*}
p(x) &= \sum_{i=1}^s c_i \sum_{r=0}^{s d_i}\sum_{j=0}^{d_i} \beta_{irj} (x^{e_{i1}} + a_{ir1})^j \dots (x^{e_{is}} + a_{irs})^j,
\end{align*}
which is a univariate polynomial of degree $\leq d(d+1)^m$. 

Writing \[f = \sum_{i=1}^s c_i \sum_{r=0}^{s d_i}\sum_{j=0}^{d_i} \beta_{irj} (x_{ir1}\dots x_{irs})^j.\]
we have that~$f$ is a multivariate polynomial with at most $s^2(sd+1)$ variables and at most 
$s(d+1)(sd+1)$ monomials. Note that $p(\zeta_n)$ is the same as $f$ evaluated at $x_{ir1} = \zeta_n^{e_{i1}}+ a_{ir1}, \dots, x_{irs} = \zeta_n^{e_{irs}} + a_{irs}$. 
Furthermore, for all $n$ such that 
$\varphi(n)> d(d+1)^m$, the degree of $p$ 
is smaller than the $n$-th cyclotomic polynomial.
Thus $p(x) \neq 0$ if and only if $p(\zeta_n)\neq 0$.
\end{proof}

\end{document}